\documentclass[english,a4paper,12pt]{article}
\usepackage[utf8]{inputenc}
\usepackage[T1]{fontenc}
\usepackage{titling}
%
%Hyperrefs
\usepackage{hyperref}
%\usepackage [unicode=true,bookmarksopen=true,colorlinks=true,urlcolor=blue, anchorcolor=blue,citecolor=blue,filecolor=blue,linkcolor=blue, menucolor=blue,pagecolor=blue,linktocpage=true]{hyperref}
%
%Standard maths packages
\usepackage{amsmath, amssymb}
\usepackage{pgf}
\usepackage{amsfonts}
\usepackage{dsfont}
\usepackage{mathrsfs}
\usepackage{mathabx}
\usepackage{stmaryrd}
\usepackage{makeidx}
\usepackage{amsbsy}
\usepackage{amsthm}
\usepackage{color}
\usepackage{fullpage}
\usepackage{enumitem}
\usepackage[vcentermath]{youngtab}
\usepackage{marginnote} %%% margin note
\usepackage{mdframed}	%%% implement corrections
\newmdenv[topline=false, bottomline=false, skipabove=\topsep, skipbelow=\topsep]{siderules}
%
%graphics
\usepackage{tikz}
\usetikzlibrary{decorations.text,calc,arrows.meta}
\usepackage[vcentermath]{youngtab}
\usepackage{tkz-euclide}
\usetkzobj{all}
\usepackage{pgfplots}
\usetikzlibrary{shapes}
\usetikzlibrary{decorations.shapes}
\usetikzlibrary{positioning}
%
%%% logical blocks
\newtheorem{theorem}{Theorem}

\newtheorem{proposition}{Proposition}
\newtheorem{lemma}{Lemma}
\newtheorem{remark}{Remark}

%
%%%caligraphic

%%% cursive

%%%  sets

\def\bM{{\mathbb M}}

%%% Ghotic
\def\gA{{\mathfrak A}}

\def\gM{{\mathfrak M}}

\newcommand{\A}{\mathfrak{A}}

%%% equations
\newcommand{\ca}[1]{{\cal #1}}
\newcommand{\ben}{\begin{equation}}
\newcommand{\een}{\end{equation}}
\def\bena{\begin{eqnarray}}
\def\eena{\end{eqnarray}}

%%%  calligraphic

\def\cP{{\ca P}}

\renewcommand{\H}{\mathcal{H}}

%%%special characters
\def\id{{\mathrm{id}}}
\def\1{{\mathds{1}}}

\newcommand{\dd}{{\rm d}}
\newcommand{\tr}{\operatorname{Tr}}

\renewcommand{\log}{\operatorname{ln}}
\renewcommand{\Re}{\operatorname{Re}}
\renewcommand{\Im}{\operatorname{Im}}

\renewcommand{\epsilon}{\varepsilon}

\def\dom{{\mathrm{dom}}}

%%% misc command shortcuts %%%

\newcommand{\RR}{\mathbb{R}}
\newcommand{\CC}{\mathbb{C}}

\begin{document}
\title{Relative entropy close to the edge}

	\author{
	Stefan Hollands$^{1}$\thanks{\tt stefan.hollands@uni-leipzig.de} 
		\\ \\
%%%
{\it ${}^{1}$Institut f\" ur Theoretische Physik,% }\\
%{\it
Universit\" at Leipzig, }\\
{\it Br\" uderstrasse 16, D-04103 Leipzig, Germany} \\
	}
\date{\today}
	
\maketitle
	
\begin{abstract}
We show that the relative entropy between the reduced density matrix of the vacuum state in some region $A$  and that of an excited state created by a unitary operator localized at a small distance $\ell$ of a boundary point $p$ is insensitive to the global shape of $A$, up to a small correction. 
This correction tends to zero as $\ell/R$ tends to zero, where $R$ is a measure of the curvature of $\partial A$ at $p$, but at a rate necessarily slower than $\sim \sqrt{\ell/R}$ (in any dimension). Our arguments are mathematically rigorous and only use model-independent, basic assumptions about quantum field theory such as locality and Poincare invariance.
\end{abstract}
\bigskip
\noindent {\small Keywords: entanglement, relative entropy, axiomatic quantum field theory, operator algebras, excited states.\
\noindent PACS: 03.67.Mn, 11.10.Cd, 03.70.+k, 02.30.Tb}
\section{Introduction}

The entanglement between a localized subsystem and its environment in a given quantum state is by now a very well 
investigated subject in quantum field theory (QFT). A basic physical picture which has been confirmed in many examples -- and which is supported also by 
certain formal arguments -- is that the dominant contribution to the entanglement arises from the strong correlations between 
degrees of freedom localized on either side and in the proximity of the surface separating the subsystem from the environment. These correlations are so 
strong, in fact, that quantities like the entanglement entropy diverge in typical states (such as the vacuum) in QFT. 

If this quantity is computed with some short distance cutoff, then in many cases, formal arguments show that the leading contributions are organized in
a series in the inverse cutoff, the dominant terms of which are related to local curvature invariants of the entangling surface, see e.g. \cite{solo} (replica trick) or 
\cite{rang} (holographic methods) or \cite{sanders_2} (operator algebraic methods) and the many references therein. However, to our knowledge no universal, rigorous 
argument based just on the fundamental principles of QFT has been given to support the idea that the dominant contribution to entanglement (in ``typical'' states)
arises from local correlations across the entangling surface.   

In this paper, we provide such a universal argument that is based just on the standard features of locality (Einstein causality) and Poincare invariance in QFT.
Rather than proving asymptotic curvature expansions in a cutoff of the type described, our idea is to directly probe the ``dominant contributions'' of the reduced density matrix of the system near the boundary in an operational way. It is in more detail as follows. 

Let us say that the state of the QFT  is $|0 \rangle$, which we will take to be the vacuum for simplicity. Let $A$ be the spatial region of our subsystem, and $B$ its complement. The reduced density matrix is then given (formally) by $\rho_A = \tr_B  |0 \rangle \langle 0 |$. 
We want to ask how this reduced density matrix looks like from the point of view of observables in $A$ localized very near a point, $p$, on the boundary of $A$, and we would like to make a quantitative statement to the effect that the dominant part of $\rho_A$ with respect to such observables does not change if 
we deform $A$ sufficiently far away from $p$.  

More precisely, we consider two regions (systems) $A_1, A_2$ whose boundaries coincide near $p$ but may differ further away from $p$. The corresponding reduced density matrices are called $\rho_j = \rho_{A_j}, j=1,2$. Consider now a unitary operator $U$ that is localized in a very small ball of size $\ell$ within $A_1$ and $A_2$ at the, very short, distance $\ell$ away from the point $p$ where $A_1$ and $A_2$ touch, see fig. \ref{fig:regions}. 

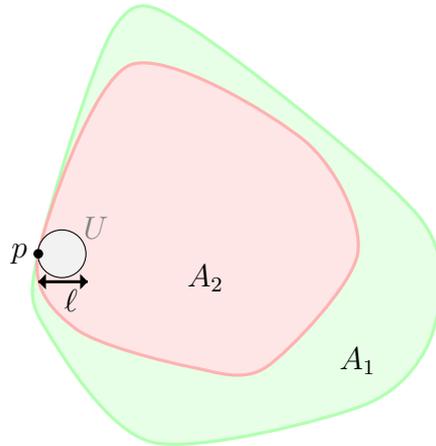
\begin{figure}[h!]
\begin{center}
\begin{tikzpicture}
%\draw [gray!50]  (0,0) -- (1,1) -- (3,1) -- (1,0)  -- (2,-1) -- cycle;
\filldraw [color=green!30, fill=green!10, very thick] plot [smooth cycle] coordinates {(0,0) (1,3) (2,3) (5,.5) (5.2,-1) (4.3,-2) (1.5,-2.5) (.1,-1)};
\filldraw [color=red!30, fill=red!10, very thick] plot [smooth cycle] coordinates {(0,0) (1.2,2.5) (3.5,1.5) (4.2,0) (3,-1.5) (2,-1.5) (.5,-1)};
\filldraw[color=black, fill=gray!10] (.31,0) circle (9pt);
\draw[line width=1pt, >={Triangle[length=1mm,width=2mm]}, ->] (0,-0.37) -- (.65,-0.37);
\draw[line width=1pt, >={Triangle[length=1mm,width=2mm]}, ->] (.65,-0.37) -- (0,-0.37);
\node[below] at (2.2, 0) {$A_2$};
	\node[below] at (4.2, -1.1) {$A_1$};
	\node[right] at (2, 0.5) {$$};
	\node[right] at (.2, -.62) {$\ell$};
	\node[right] at (.45, .34) {\textcolor{gray}{$U$}};
	\node[left] at (.0, 0) {$p$};
	\filldraw (-.0,0) circle (1.6pt);
\end{tikzpicture}
\end{center}
\begin{center}
    \caption{The two regions $A_1$ and $A_2$. The gray blob indicates the localization of $U$.}
\label{fig:regions}
\end{center}
\end{figure}

 We would like to say that with respect to all such unitary operators, the reduced density matrices $\rho_1$ and $\rho_2$ look alike. To turn this into a quantitative statement, we look at the excited state $U |0\rangle$, with reduced density matrices given, obviously, by $U \rho_j U^*$ with respect to the $A_j$. The ``distance'' in state space between 
$\rho_j$ and $U \rho_j U^*$ should in this setup be nearly the same for $j=1$ or $=2$ (i.e. for $A_1$ and $A_2$) if indeed the reduced density matrices are insensitive to changes 
of the region far away from the point $p$ near which $U$ is localized. 

In this paper, we will use the relative entropy $S(\rho / \sigma)$ between two states as the natural distance measure\footnote{The v. Neumann entropy of $\rho_j$ etc. is not well-defined in QFT since the algebras of observables are of type $III$, see e.g. \cite{sanders_2,Witten:2018zxz} for a discussion of this well-known fact. By contrast, the relative entropy is defined for any type and should therefore regarded as the primary entropy concept in QFT.}, and we will show rigorously that 
\ben
\label{Sineq1}
S(\rho_1 / U \rho_1 U^*) -  S(\rho_2 / U \rho_2 U^*) 
=  O\bigg(
 \exp \left[- (\log \sqrt{R/\ell})^\alpha \right] \bigg)
 \quad \text{ as $\ell \to 0$.}
\een
Here, $R$ characterizes the curvature of the boundary of the 
smaller region at the point $p$ and $0< \alpha < 1$ is a parameter characterizing how much energy is created by $U$ from the vacuum: 
If $U^* |0\rangle$ is decomposed in an energy eigenbasis $|E\rangle$ with respect to the generator of boosts associated with the half-space 
touching $p$, 
then we ask $|\langle E | U^* |0\rangle |^2 \le O(e^{-|E|^\alpha})$. As we shall see, due to the sharp localization of $U$ such a behavior is possible in general 
only for $\alpha < 1$, but not for $\alpha=1$, which can be seen as a manifestation of the Heisenberg uncertainty principle. Thus, convergence 
in \eqref{Sineq1} necessarily falls slightly short of $O(\sqrt{\ell/R})$, which would correspond to $\alpha=1$.

To prove our result, we use operator algebraic methods, in particular methods from Tomita-Takesaki modular theory as well as Araki's definition of the 
relative entropy for general v. Neumann algebras. The basic principle is that the modular flow associated with $A_1$
when applied to $U$ will stay within the local algebra associated with $A_2$ for longer and longer as $\ell \to 0$; in fact, the maximum flow time up to which this is the case goes like $\sim (2\pi)^{-1} \log(R/\ell)$. As already shown in a classic paper by Fredenhagen \cite{fredenhagen_5}, this gives some control over the modular operators associated with $A_1$ and $A_2$. We improve and extend these methods so as to be able to obtain the bound \eqref{Sineq1}, the precise formulation of which is provided in thm. \ref{ttouch} below.

\medskip
\noindent
{\bf Notations and conventions:} Our use of the big-O-notation is the following. We say that $f(x) \le (\ge) O(g(x))$ as $x \to \infty$ if 
there is an $x_0$ and positive constants $C$ (resp. $c$) such that $f(x) \le Cg(x)$ (resp. $f(x) \ge cg(x)$) for $x \ge x_0$. When both relations hold, we 
say $f(x) = O(g(x))$. 

\section{Relative entropy between vacuum and an excited state}\label{sect1}

We first recall the definition of the relative entropy in terms of modular operators due to Araki \cite{araki_3} and then state our main technical result. It will be proven in sec. \ref{sect2} and then used in sec. \ref{sect3} to demonstrate \eqref{Sineq1}, see thm. \ref{ttouch}. For details on 
operator algebras in general we refer to \cite{Bratteli} and for a recent survey of operator algebraic methods in quantum information theory in
QFT, we refer to \cite{sanders_2}. A nice exposition directed towards theoretical physics audience is \cite{Witten:2018zxz}.

Let $\gM$ be a v. Neumann algebra\footnote{An algebra of bounded operators that is 
closed in the topology induced by the size of matrix elements.} of 
operators on a Hilbert space\footnote{We always assume that $\H$ is separable.}, $\H$. We assume that $\H$ contains a ``cyclic and separating'' vector for $\gM$, that is, a unit vector $|\Omega \rangle$
such that the set consisting of $a|\Omega\rangle$, $a \in \gM$ is a dense subspace of $\H$, and such that $a|\Omega\rangle=0$ always implies $a=0$
for any $a \in \gM$. We say in this case that $\gM$ is in ``standard form'' with respect to the given vector. $\gM^+$ denotes the set of positive, self-adjoint elements in $\gM$ (which are always of the form $a=b^*b$ for some $b \in \gM$).

In this situation, one can define the Tomita operator $S$ on the domain ${\rm dom}(S) = \{ a|\Omega\rangle \mid a \in \gM\}$ by 
\ben
Sa|\Omega \rangle = a^* |\Omega \rangle
\een
The definition is consistent due to the cyclic and separating property. It is known that $S$ is a closable operator, and we denote its closure by the same symbol. This closure has a polar decomposition denoted by $S=J\Delta^\frac12$, with $J$ anti-linear and unitary and $\Delta$ self-adjoint and non-negative. 
Tomita-Takesaki theory concerns the properties of the operators $\Delta, J$. The basic results of the theory are the following, see e.g. \cite{Bratteli}: 
\begin{enumerate}
\item 
$J \gM J = \gM'$, where the prime denotes the {\bf commutant} (the set of all bounded operators on $\H$ 
commuting with all operators in $\gM$) and $J^2 = 1, J\Delta J = \Delta^{-1}$, 

\item 
If $\sigma^t(a) =  \Delta^{it} a \Delta^{-it}$, then $\sigma^t \gM = \gM$ and $\sigma^t \gM' = \gM'$ for all $t \in \RR$,

\item
The positive, normalized (meaning $\omega(a) \ge 0 \, \, \forall a \in \gM^+, \omega(1) = 1$) 
linear {\bf expectation functional} 
\ben 
\omega(a) = \langle \Omega |a\Omega \rangle
\een
satisfies the {\bf KMS-condition} 
relative to $\sigma^t$. This condition states that for all $a,b \in \gM$, the bounded function 
\ben
\label{Fdef}
t \mapsto F_{a,b}(t) = \omega (a \sigma^t(b)) \equiv \langle \Omega | a \Delta^{it} b \Omega \rangle
\een
has an analytic continuation to the strip $\{ z \in \CC \mid -1 < \Im z < 0 \}$ with the property that its boundary value for $\Im z \to -1^+$ exists and
is equal to
\ben
\label{KMS}
F_{a,b}(t-i) = \omega (\sigma^t(b) a).
\een

\item 
Any normal (i.e. continuous in the weak$^*$-topology) positive linear functional $\omega'$ on $\gM$ has a unique vector representative 
$|\Omega' \rangle$ in the natural cone 
\ben
\cP^\sharp = \overline{\{ \Delta^{1/4} a |\Omega \rangle \mid a \in \gM^+\}}=
\overline{\{ aj(a) |\Omega \rangle \mid a \in \gM\}}, 
\een
where the overbar means closure and $j(a) = JaJ$. The state functional is thus
$\omega'(a) = \langle \Omega' |a\Omega' \rangle$ for all $a \in \gM$. 
\end{enumerate}

\medskip
\noindent
{\bf Key example}: These claims are easy to verify in the ``type $I_n$'' case $\gM = M_n(\CC) \otimes 1_n$, which acts 
on the first tensor factor in the Hilbert space $\H = 
\CC^n \otimes \CC^n$ (note that elements in the Hilbert space can be identified with matrices on which $\gM \cong M_n(\CC)$ acts by left multiplication). 
The commutant is $\gM' = 1_n \otimes M_n(\CC)$. A vector $|\Omega \rangle$ in this Hilbert space 
is cyclic and separating if $|\Omega \rangle = \sum_{j=1}^n \sqrt{p_j} |j \rangle \otimes |j\rangle$ in some ON basis $|j\rangle$ and iff all $p_j > 0$, $\sum_{j=1}^n p_j=1$. In the example, the corresponding state functional can be represented by the reduced density matrix 
\ben
\rho = \sum_{j=1}^n p_j |j \rangle \langle j|, \quad \omega(a) = \tr_{\CC^n}(a \rho) \quad (a \in \gM \cong M_n(\CC)), 
\een
and we see that the cyclic and separating property in a sense says that this reduced density matrix is ``as mixed as possible''.
The modular operator is given by 
\ben
\Delta^\frac12 = \rho^{\frac12} \otimes \rho^{-\frac12},  
\een
while the modular conjugation is given by $J(a \otimes 1_n) = 1_n \otimes \bar a$, where the overbar means the element-wise complex conjugation of a matrix $a$. 
The modular flow is, therefore, $\sigma^t(a) = \rho^{it} a \rho^{-it}$. The ``modular Hamiltonian'', 
\ben 
\log \Delta = \log \rho \otimes 1_n - 1_n \otimes \log \rho
\een 
can be split in the present example into a part belonging to $\gM$ (the first term) and a part belonging to $\gM'$ (the second term). This split is impossible for general v. Neumann algebras, in particular for the type $III_1$-factors appearing in quantum field theories\footnote{The possibility of making the split implies that 
$\sigma^t$ is inner, i.e. can be written as $\sigma^t(a) = u(t) a u(t)^*$ for unitaries $u(t)$ in $\gM$. One characterization of type $III$ v. Neumann algebras 
is that $\sigma^t$ precisely cannot be inner for any normal state $\omega$.}.  The natural cone consists of the self-adjoint, 
positive semi-definite matrices in $\H$.

\medskip
\noindent

A generalization of this construction is that of the relative modular operator, flow etc. \cite{araki_1}. For this purpose, let $\omega'$ be a normal state on $\gM$, $|\Omega' \rangle$ its unique vector representative in the natural cone in $\H$, which is assumed (for simplicity) to be cyclic and separating, too. 
Then we can consistently define
\ben
S_{\omega, \omega'} a |\Omega' \rangle = a^* |\Omega \rangle
\een
form the closure, and make the polar decomposition $S_{\omega, \omega'}=J_\omega^{} \Delta^\frac12_{\omega,\omega'}$. 
The {\bf relative entropy} is defined by 
\ben
S(\omega / \omega') = \langle \Omega | (\log  \Delta_{\omega,\omega'}) \Omega \rangle . 
\een
In the above example, we get $\Delta^\frac12_{\omega, \omega'} = \rho^{\frac12} \otimes \rho^{\prime -\frac12}$ and thus 
$S(\omega / \omega') = \tr \rho ( \log \rho - \log \rho')$, where $\rho'$ is the density matrix associated with $\omega'$. 
The relative entropy has many beautiful properties, the important ones of which were already derived by Araki \cite{araki_3}.  It is e.g. never negative, but can be infinite, is decreasing under completely positive maps, is jointly convex in both arguments, etc. The physical interpretation of $S(\omega / \omega')$ is the amount of information gained if we update our belief about the system from the state $\omega$ to $\omega'$.

In this paper, we are interested in the special case when $\omega'=\omega_U$, where
\ben
\omega_U(a) \equiv \omega(U^* a U) = \langle U \Omega | a U\Omega \rangle,  
\een
and where $U$ is a unitary operator from $\gM$. In applications, $\omega$ is for instance the vacuum state and $\omega_U$ represents an excited state. 
The corresponding vector representative in the natural cone is $|\Omega_U \rangle=Uj_\omega (U)|\Omega \rangle$, with $j_\omega (a)= J_\omega a J_\omega$. 
Going through the definitions, one finds immediately that $j_\omega(U) \Delta^{1/2}_\omega j_\omega(U^*) = \Delta^{1/2}_{\omega, \omega'}$, implying that
\ben\label{Srel}
S(\omega/  \omega_U) = -\langle U^* \Omega |  (\log \Delta) U^*\Omega \rangle, 
\een
where $\Delta$ is the modular operator of the original state $\omega$. More specifically, we are in the following setup:

\medskip
\noindent
{\bf Basic setup:} We assume that we have an inclusion $\gM_1 \supset \gM_2$ of v. Neumann algebras in standard form on a Hilbert space $\H$ with cyclic and separating vector $|\Omega\rangle$ (for both $\gM_j$) The associated modular operators are called
$\Delta_1, \Delta_2$ and the modular flows are called $\sigma^t_j(a) = \Delta_j^{it} a \Delta_j^{-it}, j=1,2$. Note that if $a \in \gM_2$, then 
$\sigma^t_1(a)$, the modular flow of $\gM_1$, will leave $\gM_2$ for $t \neq 0$ in general.

\medskip
\noindent

Given a unitary $U \in \gM_2$, we can then define the relative entropy between $\omega$ and $\omega_U$ with respect to $\gM_1$ (i.e. the states are viewed as functionals on $\gM_1$) or with respect to $\gM_2$ (i.e. the states are viewed as functionals on $\gM_2$). These relative entropies are denoted by 
\ben
S_j(\omega / \omega_U) := S(\omega \restriction_{\gM_j} / \omega_U \restriction_{\gM_j}), \quad j=1,2
\een
 and are in general different. (The monotonicity of the relative entropy \cite{araki_5} gives $S_1 \ge S_2$.) Our main technical result is:

\begin{theorem}\label{thm1}
Let $U \in \gM_2$ be a unitary such that $\sigma_1^t(U) \in \gM_2$ for $|t| \le \tau$. 
\begin{enumerate}
\item For $n>1$ we have that
\ben
%\begin{split}
| S_1(\omega / \omega_U) -  S_2(\omega / \omega_U)| 
\le  O(\tau^{-n+1}) \omega_{U^*}((1+H_-)^n) , 
%\end{split}
\een
for large $\tau$ uniformly in $U$. 
\item For $0<\alpha<1$ we have that
\ben
%\begin{split}
| S_1(\omega / \omega_U) -  S_2(\omega / \omega_U) | 
\le  O(\tau^{1-\alpha}e^{-(\pi\tau)^\alpha}) \, \omega_{U^*}(e^{H_-^\alpha}) , 
%\end{split}
\een
for large $\tau$ uniformly in $U$. 
\end{enumerate}
Here, $H_-= -E_1^-\log \Delta_1 \ge 0$ is the negative part of the modular Hamiltonian, where $E^-_1$ is the spectral projector of $\log \Delta_1$ (see \eqref{decomp}) associated with the negative part of the spectrum.
\end{theorem}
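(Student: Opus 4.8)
The plan is to reduce the claim to a statement about the spectral measure of $\log\Delta_1$ in the vector $\psi:=U^{*}\Omega$ and to control that measure through the modular flow. Since $U\in\gM_2\subset\gM_1$, the vector $U^{*}\Omega$ represents $\omega_{U^{*}}$ on \emph{both} algebras, so formula \eqref{Srel} applies to each and gives $S_j(\omega/\omega_U)=-\langle\psi|\log\Delta_j\,\psi\rangle$, whence
\ben
S_1(\omega/\omega_U)-S_2(\omega/\omega_U)=\langle\psi|(\log\Delta_2-\log\Delta_1)\psi\rangle .
\een
The natural objects are the correlators $g_j(t):=\langle\psi|\Delta_j^{it}\psi\rangle=\omega(U\sigma_j^{t}(U^{*}))$, which by the KMS property \eqref{Fdef}--\eqref{KMS} extend analytically to the strip $-1<\Im z<0$ with $g_j(0)=g_j(-i)=1$. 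Note that $\omega_{U^{*}}(f(H_-))=\langle\psi|f(H_-)\psi\rangle$ records exactly the part of the spectral measure of $\log\Delta_1$ on the negative half-axis, i.e. where $\Delta_1<1$ and $\Delta_1^{-1/2}$ is large.

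Second, I would turn the log-difference into an integral over modular time. Using the operator-monotone representation $\log\Delta_2-\log\Delta_1=\int_0^{\infty}\big((\Delta_1+\sigma)^{-1}-(\Delta_2+\sigma)^{-1}\big)\,d\sigma$ together with the Mellin identity $(\Delta+\sigma)^{-1}=\tfrac{1}{2\sqrt\sigma}\int_{\RR}\tfrac{\sigma^{is}}{\cosh\pi s}\,\Delta^{-\half-is}\,ds$, one recasts $S_1-S_2$ as an integral over the modular-time variable $s$ of the correlators $\langle\psi|(\Delta_1^{-\half-is}-\Delta_2^{-\half-is})\psi\rangle$ against a kernel that decays like $e^{-\pi|s|}$. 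The poles of $1/\cosh(\pi s)$ at $\Im s=\pm\half$ are the origin of the rate $\pi$, matching $\pi\tau\sim\log\sqrt{R/\ell}$; and the appearance of the edge power $\Delta_j^{-\half}$ is precisely what ties the estimate to $H_-$.

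Third --- and this is the heart of the matter --- I would use the hypothesis $\sigma_1^{t}(U)\in\gM_2$ for $|t|\le\tau$ to cancel the contribution of $|s|\le\tau$. For such $t$ one has $\Delta_1^{it}\psi=\sigma_1^{t}(U^{*})\Omega\in\gM_2\Omega$, so this vector lies in the domain of $\Delta_2^{z}$ for $0\le\Re z\le\half$ and Tomita--Takesaki analyticity for $\gM_2$ lets one re-express the $\gM_1$-data in terms of $\gM_2$-data on the segment $|s|\le\tau$. Arranging the representation so that the $\gM_1$ contribution for $|s|\le\tau$ is written through the $\gM_2$ flow makes it cancel against the corresponding $\gM_2$ term, leaving an integrand supported on $|s|>\tau$. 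Making this rigorous --- justifying the interchange of the $\sigma$- and $s$-integrations and the contour deformations in the presence of the unbounded generators $\log\Delta_j$, and keeping track of the relevant domains --- is the main obstacle, and is exactly the point at which the methods of \cite{fredenhagen_5} must be sharpened.

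Finally, I would estimate the remaining tail $|s|>\tau$ by pairing the kernel decay $e^{-\pi|s|}$ with a bound on the correlators coming from the negative part of the spectrum of $\log\Delta_1$. Here the correlators are no longer controlled by $\|\psi\|$ but by weighted quantities tied to $\Delta_1^{-\half}\psi$, i.e. by moments of $H_-$ in $\omega_{U^{*}}$, through a Markov/Chebyshev estimate on the spectral measure combined with an optimal choice of spectral cutoff balanced against $\tau$. For part (1), a finite $n$-th moment yields $\int_\tau^\infty|s|^{-n}\,ds=O(\tau^{-n+1})$, giving $\tau^{-n+1}\,\omega_{U^{*}}((1+H_-)^n)$. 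For part (2), the $\alpha$-exponential moment $\omega_{U^{*}}(e^{H_-^{\alpha}})$ would in principle permit pushing all the way to the edge $\Delta_1^{-\half}\psi$ (which corresponds to $\alpha=1$ and to a convergence factor $e^{-\pi\tau}$), but for sharply localised $U$ this edge vector generally fails to lie in $\H$; one must therefore stop short of the edge, and a saddle-point balance of the modular-time cutoff against the spectral weight $e^{H_-^{\alpha}}$ produces the stretched exponential $O(\tau^{1-\alpha}e^{-(\pi\tau)^{\alpha}})$. Since every estimate passes through $\psi=U^{*}\Omega$ and the $H_-$-moments only, all bounds are uniform in $U$.
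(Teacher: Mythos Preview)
Your overall architecture is right --- the reduction via \eqref{Srel} to $\langle\psi|(\log\Delta_2-\log\Delta_1)\psi\rangle$, the resolvent integral for the log, a Mellin representation of the resolvent, and a cancellation for $|t|\le\tau$ coming from the localisation hypothesis --- but the specific Mellin route you chose has a genuine gap. Shifting the Barnes contour to $c=\tfrac12$ to obtain the kernel $1/\cosh(\pi s)$ forces the integrand to contain $\Delta_j^{-1/2-is}\psi$, and $\psi=U^*\Omega$ is \emph{not} in $\dom(\Delta_1^{-1/2})$ in the regime of interest (this is exactly the $\alpha<1$ obstruction you yourself invoke at the end). So the very identity you use to ``recast $S_1-S_2$'' does not hold as a convergent integral on the vector at hand, and the subsequent cancellation step inherits the same problem: knowing $\Delta_1^{it}\psi\in\gM_2\Omega$ puts that vector in $\dom(\Delta_2^{+1/2})$ via $S_2$, not in $\dom(\Delta_2^{-1/2})$, which is what your representation would need. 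Regularising by spectral truncation of $\log\Delta_1$ does not help, because the truncation does not commute with $\Delta_2$ and destroys the cancellation.

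The paper avoids this by never leaving the unitary line. It keeps the Mellin contour at $c=0^+$ (so only $\Delta_1^{it}$ appears) and instead uses Fredenhagen's graph-norm device: equip $\dom(S_1)$ with the inner product $\langle\,\cdot\,|(1+e^u\Delta_1)\,\cdot\,\rangle$, let $P_2$ be the orthogonal projection onto the closed subspace coming from $\dom(S_2)$, and obtain the exact identity
\[
\langle a\Omega|(1+e^u\Delta_1)^{-1}a\Omega\rangle-\langle a\Omega|(1+e^u\Delta_2)^{-1}a\Omega\rangle
=\|(1-P_2)I^{-1}(1+e^u\Delta_1)^{-1}a\Omega\|^2 .
\]
Now the hypothesis $\sigma_1^t(U^*)\in\gM_2$ gives $(1-P_2)I^{-1}\Delta_1^{it}U^*\Omega=0$ for $|t|\le\tau$ directly, so in the $c=0^+$ Mellin integral for $(1+e^u\Delta_1)^{-1}$ the segment $|t|\le\tau$ drops out without any appeal to $\Delta_2$-analyticity at negative powers. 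A smooth cutoff then turns the remaining tail into a factor $e^{-2\pi\tau}$ (Lemma~1). This bound is interpolated, via a free exponent $0\le\gamma\le1$, with a second ``trivial'' bound $R_a(u)$ on the same squared norm; after integrating in $u$ and choosing $\gamma$ differently on the three ranges $(-\infty,0)$, $[0,\pi\tau)$, $[\pi\tau,\infty)$, one arrives at Theorem~2, where the function $\eta$ playing the role of your spectral weight is what produces the $H_-$-moments. Lemma~6 then reads off the two cases $\eta(-k)\sim k^n$ and $\eta(-k)\sim e^{k^\alpha}$. In short: the missing idea is the Fredenhagen projection identity, which replaces your contour shift to $c=\tfrac12$ by a mechanism that needs only the bounded operators $\Delta_1^{it}$.
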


This theorem is a direct consequence of \eqref{Srel} and prop. \ref{prop1}. It expresses that the difference between the relative entropies goes to zero for 
unitaries having a large $\tau$, i.e. unitaries staying inside $\gM_2$ for long under the modular flow of $\gM_1$. 
The rate of decay depends on the property of the unitary. Roughly speaking, the less energetic the excited state 
$\omega_{U^*}$ is with respect to the negative part $H_-=-E^-_1 \log \Delta_1$ of the modular Hamiltonian, the faster the decay of the difference as $\tau$
goes to infinity. 
For an exponential decay (i.e. $\alpha = 1$ in the second case), one would need a non-trivial unitary $U \in \gM_2$ such that the vector $U^* |\Omega \rangle$ is in the domain of the {\em inverse} modular operator $\Delta_1^{-1}$. Such unitaries typically do not exist, see sec. \ref{sect3} for an illustrative example. In some sense this is a consequence of the uncertainty principle. Thus, we need to content ourselves with a sub-exponential decay. 

We will illustrate the meaning of this result in the context of relativistic quantum field theory in sec. \ref{sect3}. 

\section{Technical results}
\label{sect2}

 In this section, we assume the Basic Setup described above. Our first lemma is the following:

\begin{lemma}\label{lemma1}
Let $\epsilon(k)$ be any non-negative, non-increasing continuous function such that 
$\int^\infty \epsilon(k)/k \, \dd k < \infty$. 
Then there exists a positive real valued continuous 
function $g(k)=O(e^{-|k| \epsilon(|k|)})$ (as $|k| \to \infty$) such that for all 
$a \in \gM_2$ having the property $\sigma_1^t(a) \in \gM_2$ for $|t| \le \tau>1$: 
\ben
\label{main}
\begin{split}
0 &\le \langle a\Omega |  (1+e^u \Delta_1)^{-1} a \Omega\rangle  - \langle a\Omega | (1+e^u \Delta_2)^{-1} a \Omega\rangle \\
&\le e^{-2\pi \gamma \tau} R_a(u)^{1-\gamma} 
\, 
\left\{ 
\left\| g(\log \Delta_1 +u) a \Omega \right\|^2
 + e^u
\left\| g(\log \Delta_1 -u) a^* \Omega\right\|^2 
 \right\}^\gamma 
 \end{split}
\een
for all $u \in \RR, 0 \le \gamma \le 1$, where 
\ben\label{Fdef}
R_a(u) =  
\begin{cases}
\langle a\Omega |  (1+e^u \Delta_1)^{-1} a \Omega\rangle & \text{for $u>0$,}\\
e^{u} \langle a^* \Omega | (1+e^{-u} \Delta_1)^{-1} a^* \Omega \rangle & \text{for $u\le 0$.}  
\end{cases}
\een
\end{lemma}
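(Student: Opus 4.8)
The plan is to reduce the two-sided estimate to two separate bounds on the non-negative quantity
\[
D(u) := \langle a\Omega|(1+e^u\Delta_1)^{-1}a\Omega\rangle - \langle a\Omega|(1+e^u\Delta_2)^{-1}a\Omega\rangle ,
\]
and then to recover the full family of inequalities by the elementary scalar interpolation $D(u)=D(u)^{1-\gamma}D(u)^{\gamma}$, $0\le\gamma\le1$. Concretely I would establish (i) the trivial bounds $0\le D(u)\le R_a(u)$, and (ii) the sharp bound $D(u)\le e^{-2\pi\tau}\{\,\|g(\log\Delta_1+u)a\Omega\|^2+e^u\|g(\log\Delta_1-u)a^*\Omega\|^2\,\}$; feeding both into $D=D^{1-\gamma}D^{\gamma}$ then produces exactly the factor $e^{-2\pi\gamma\tau}R_a(u)^{1-\gamma}\{\cdots\}^{\gamma}$. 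For the upper half of (i) one simply drops the (non-negative) subtracted term, so that for $u>0$ one gets $D(u)\le\langle a\Omega|(1+e^u\Delta_1)^{-1}a\Omega\rangle=R_a(u)$; the case $u\le0$ then follows from the same statement applied to $a^*$, using the resolvent identity $(1+e^u\Delta_j)^{-1}=1-(1+e^{-u}\Delta_j^{-1})^{-1}$ together with $\Delta_j^{-1}=J_j\Delta_jJ_j$ and $\Delta_j^{1/2}a\Omega=J_ja^*\Omega$, which is precisely the $u\mapsto-u$, $a\mapsto a^*$ symmetry built into the two cases of $R_a(u)$. The positivity $D(u)\ge0$ I would take as the monotonicity of the resolvent functional under the inclusion $\gM_2\subset\gM_1$, consistent with $S_1\ge S_2$.

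The heart of the matter is bound (ii), for which I would pass to a modular representation of $D(u)$. Writing $\tfrac{1}{1+e^{u+s}}=\tfrac12\bigl(1-\tanh\tfrac{u+s}{2}\bigr)$ and inserting the Fourier transform of $\tanh$, whose only non-constant part is the kernel $\propto 1/\sinh(\pi t)$, yields
\[
D(u)=\frac{i}{2}\,\mathrm{p.v.}\!\int_{\RR}\frac{e^{iut}}{\sinh(\pi t)}\,\bigl(G_1(t)-G_2(t)\bigr)\,\dd t,\qquad G_j(t):=\langle a\Omega|\Delta_j^{it}a\Omega\rangle=\omega\bigl(a^*\sigma_j^t(a)\bigr).
\]
The constant ($\delta$-function) part of the kernel drops out because $G_1(0)=G_2(0)=\|a\Omega\|^2$, and this same identity cancels the singularity of $1/\sinh(\pi t)$ at $t=0$. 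By the KMS condition \eqref{KMS} applied to each $\gM_j$ (legitimate since $a\in\gM_2\subset\gM_1$), the functions $G_j$ continue analytically into the strip $-\tfrac12<\Im z<\tfrac12$, with boundary values on $\Im z=-\tfrac12$ read off from $\Delta_j^{1/2}a\Omega=J_ja^*\Omega$.

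The exponential gain is then extracted by deforming the $t$-contour into the lower half-strip of width $\tfrac12$ and exploiting the hypothesis $\sigma_1^t(a)\in\gM_2$ for $|t|\le\tau$: in this window the vectors $\Delta_1^{it}a\Omega=\sigma_1^t(a)\Omega$ lie in $\gM_2\Omega$, so that $\Delta_2^{1/2}$ and $J_2$ act on them, and the $\Delta_1$- and $\Delta_2$-continuations can be matched along $|t|\le\tau$. This is the Fredenhagen mechanism \cite{fredenhagen_5}, leaving a difference that is controlled only off the window. Since the operative strip has width $\tfrac12$ and the matching interval has half-length $\tau$, the harmonic measure of its complement seen from the interior decays like $e^{-2\pi\tau}$, and a two-constants (Phragm\'{e}n--Lindel\"{o}f) estimate produces the factor $e^{-2\pi\tau}$ in (ii). The residual boundary integral is finite only after inserting a regularizing weight, which is the role of $g$: one chooses $g(k)=O(e^{-|k|\epsilon(|k|)})$ so that the two edges of the strip (and the $a\leftrightarrow a^*$ reflection) contribute exactly $\|g(\log\Delta_1\pm u)(\cdot)\Omega\|^2$. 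The summability hypothesis $\int^\infty\epsilon(k)/k\,\dd k<\infty$ is the (Beurling-type) condition guaranteeing that such a positive continuous $g$ with the prescribed sub-exponential decay exists while remaining compatible with the width-$\tfrac12$ analyticity; this is also the structural reason the rate cannot be pushed to a genuine exponential.

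The main obstacle I anticipate is the rigorous implementation of this gluing and contour deformation in the presence of unbounded modular operators: one must control the boundary vectors $\Delta_j^{1/2}a\Omega$, the resolvents of $\Delta_j^{-1}$, and the weighted vectors $g(\log\Delta_1\pm u)a\Omega$ on their natural domains, verify convergence of the deformed integrals, and — crucially — ensure that every constant is uniform in the unitary $U$ (hence in $a$) and in $u\in\RR$, as demanded by Theorem \ref{thm1}. A secondary technical point is the explicit construction of $g$ from $\epsilon$ with the stated decay and the check that the two edge contributions assemble precisely into the brace in (ii). Once these analytic points are secured, positivity, the trivial bound, and the scalar interpolation combine to give \eqref{main}.
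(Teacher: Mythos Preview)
Your overall architecture --- prove $0\le D(u)\le R_a(u)$ and $D(u)\le e^{-2\pi\tau}\{\cdots\}$ separately, then interpolate via $D=D^{1-\gamma}D^{\gamma}$ --- matches the paper exactly. The handling of the trivial bound is also close in spirit, although note that the positivity $D(u)\ge0$ is not a fact you can simply invoke: in the paper it \emph{is} the content of the projection identity
\[
D(u)=\|(1-P_2)I^{-1}(1+e^u\Delta_1)^{-1}a\Omega\|^2,
\]
where $I$ embeds the graph Hilbert space $\H_1=\dom(S_1)$ (with inner product $\langle\cdot|(1+e^u\Delta_1)\cdot\rangle$) into $\H$, and $P_2$ projects onto the closed subspace $I^{-1}\dom(S_2)$.

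The genuine gap is in your route to the sharp bound. You write $D(u)$ as an integral of the scalar difference $G_1(t)-G_2(t)$ with $G_j(t)=\langle a\Omega|\Delta_j^{it}a\Omega\rangle$, and then propose to ``match the $\Delta_1$- and $\Delta_2$-continuations along $|t|\le\tau$''. But the hypothesis $\sigma_1^t(a)\in\gM_2$ does \emph{not} make $G_1(t)=G_2(t)$ on that window, nor does it give any obvious scalar relation between them there; $\sigma_1^t(a)$ and $\sigma_2^t(a)$ are different elements of $\gM_2$. So there is nothing for a Phragm\'en--Lindel\"of argument to bite on at the scalar level. The actual Fredenhagen mechanism is vector-valued: one applies the Fourier/Mellin representation of $(1+y)^{-1}$ to the \emph{vector} $(1-P_2)I^{-1}(1+e^u\Delta_1)^{-1}a\Omega$ and uses that, for $|t|\le\tau$, $\Delta_1^{it}a\Omega=\sigma_1^t(a)\Omega\in\dom(S_2)$, hence $(1-P_2)I^{-1}\Delta_1^{it}a\Omega=0$. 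This kills the integrand on the window outright and is what produces the $e^{-\pi\tau}$ gain after inserting a smooth cutoff; no complex-analytic gluing of two different functions is needed.

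A second, smaller divergence: the paper does not use harmonic measure or a two-constants lemma. Instead it replaces the characteristic function of $\{|t|\ge\tau\}$ by a smooth $\hat G(t-\tau+\tfrac12)$ whose Fourier transform $G(k)$ has the sub-exponential decay $|G(k)|\le c\,e^{-|k|\epsilon(|k|)}$ (constructed explicitly as an infinite product of sinc functions \`a la Ingham, which is where the condition $\int^\infty\epsilon(k)/k\,\dd k<\infty$ enters). One then evaluates the resulting integral $f_\tau(e^k)$ by residues over the poles of $(1+e^{k-p})^{-1}$ in the upper half-plane, obtaining $|f_\tau(e^k)|\le e^{-\pi\tau}g(k)$ directly. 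Your Beurling-type heuristic for $g$ is morally right, but the paper's construction is concrete and avoids the domain and convergence issues you flag as obstacles.
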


\begin{remark}
Below we will need the lemma only with $g = 1$. If we also set $\gamma=1$, the statement is already proven in \cite{fredenhagen_5}. 
The case of general $g$ can be interesting for other applications, e.g. if $u$ remains bounded or if $a^* |\Omega \rangle$ is very small, i.e. 
if $a$ is approximately a creation operator. 
\end{remark}

\begin{proof}
   We let $S_i$ be the Tomita operators for $\gM_i$ with polar decompositions $S_i = J_i \Delta_i^{1/2}$.  Note that, 
since $\gM_2 \subset \gM_1$, $\dom(S_2) \subset \dom(S_1)$. 
The set $\dom(S_1)$ is a Hilbert space called $\H_1$ with respect to the inner product (graph norm)
\ben
( \Phi , \Psi ) = \langle \Phi | \Psi \rangle + e^u \langle S_1 \Psi | S_1 \Phi \rangle = \langle \Phi | (1+e^u \Delta_1) \Psi \rangle . 
\een
Letting $I:\H_1 \to \dom(S_1)$ 
be the identification map, one shows that  $I^{-1} \dom(S_2)$ is a closed subspace $\H_2 \subset \H_1$ with associated orthogonal projection $P_2$. 
The operators $V_j= I^{-1}(1+ e^u\Delta_j)^{-1/2}$ are isometries from $\H$ to $\H_j$ ($j=1,2$) and their adjoints are $V_j^*= (1+ e^u\Delta_j)^{1/2} I P_j$
(with $P_1=1$). There follow the relations
\bena
I P_j I^* &=& I V_j^{} V_j^* I^* = (1 + e^u\Delta_j)^{-1}, \quad j=1,2\\
I^* &=& I^{-1}(1+ e^u\Delta_1)^{-1}, 
\eena
which can already be found in \cite{fredenhagen_5}. 

These relations imply that for all $a \in \gM_2$ 
\ben\label{frede}
\langle a\Omega |(1 + e^u\Delta_1)^{-1} a  \Omega \rangle -  \langle a\Omega | (1 + e^u\Delta_2)^{-1} a  \Omega \rangle
=  \| (1- P_2) I^{-1} (1 + e^u\Delta_1)^{-1} a \Omega \|^2 \ .
\een
The fact that the right side is manifestly non-negative already gives the left inequality in \eqref{main}. 
One way to estimate the right side is as follows. For $u>0$, we simply use that $\|1-P_2\|=1$ to get 
\ben
\label{step0}
\begin{split}
\| (1- P_2) I^{-1} (1 + e^u\Delta_1)^{-1} a \Omega \|^2 &\le \|  I^{-1} (1 + e^u\Delta_1)^{-1} a \Omega \|^{2} \\
&= \langle a\Omega |(1 + e^u\Delta_1)^{-1} a  \Omega \rangle,
\end{split}
\een
using  in the last step the definition of $I$. For $u \le 0$, we use that $(1- P_2) I^{-1} a |\Omega\rangle = 0$ since $a |\Omega \rangle$ is in the domain of $S_2$, meaning that $I^{-1}a |\Omega \rangle$ is in $\H_2$. 
Thus we can write
\ben
\label{step01}
\begin{split}
\| (1- P_2) I^{-1} (1 + e^u\Delta_1)^{-1} a \Omega \|^2 &= \|  (1- P_2) I^{-1} [(1 + e^u\Delta_1)^{-1} -1]a \Omega \|^{2} \\
&\le \|  I^{-1} [(1 + e^u\Delta_1)^{-1} -1]a \Omega \|^{2} \\
&= e^{u} \langle a \Omega | \Delta_1 (1+e^{-u} \Delta_1^{-1})^{-1} a \Omega \rangle\\
&= e^{u} \langle J_1 a^* \Omega |  (1+e^{-u} \Delta_1^{-1})^{-1} J_1 a^* \Omega \rangle\\
&= e^{u} \langle a^* \Omega | (1+e^{-u} \Delta_1)^{-1} a^* \Omega \rangle
\end{split}
\een
using again the definition of $I$ in the third step and $S_1 = J_1 \Delta^\frac12_1$ in the fourth line and $J_1 \Delta_1 J_1 = \Delta_1^{-1}$ in 
the last line. Together with \eqref{step0} this shows that 
\ben
\label{step2}
\| (1- P_2) I^{-1} (1 + e^u\Delta_1)^{-1} a \Omega \|^2 \le R_a(u), 
\een
which is our first way to estimate the right side of \eqref{frede}. 

A second way is as follows.
For a real number $y>0$, we write:
\ben\label{fourier}
\frac{1}{1+y} = \int_{-i\infty + 0}^{i\infty + 0} \Gamma(t) \Gamma(1-t) y^t \frac{\dd t}{2\pi i}
=\frac{i}{2} \int_\RR \frac{y^{it}}{\sinh[\pi (t+i0)]} \dd t  
\een
where the first equality is the standard Mellin-Barnes representation of the geometric series and the second follows from the properties of the Gamma function.
Therefore, by the spectral calculus
\ben
(1- P_2) I^{-1} (1 + e^u\Delta_1)^{-1} a |\Omega \rangle = \frac{i}{2} 
\int_\RR \frac{ \dd t \ e^{iut}}{\sinh[\pi (t+i0)]}  \, (1- P_2) I^{-1}  \Delta_1^{it} a |\Omega \rangle . 
\een
For $|t| < \tau$, we know by assumption that $\sigma_1^t(a)$ is in $\gM_2$, so
$\Delta_1^{it} a |\Omega \rangle=\sigma_1^t(a) |\Omega \rangle$ is in $\dom(S_2)$, so $I^{-1} \Delta_1^{it} a |\Omega \rangle$ is in $\H_2$, so $(1- P_2) I^{-1}  \Delta_1^{it} a |\Omega \rangle=0$. 
So we can effectively restrict the range in the integral to $|t| \ge \tau$ and drop the $i0$-prescription. A even better estimate is obtained if instead we choose a 
real-valued smooth function $\hat G(t)$ such that $\hat G(t) = 0$ for $t<-\frac12$ and $\hat G(t) = 1$ for $t>\frac12$ related to $G(k)$ via the Fourier transform, 
\ben
G(k) = \frac{1}{2\pi} \int_\RR e^{ikt} \hat G(t) \dd t . 
\een
Now define
\ben
f_\tau(y) = \Im \int_{0}^\infty \hat G\left(t-\tau+\frac12 \right) \frac{y^{-it}}{\sinh(\pi t)} \dd t. 
\een
It follows that    
\ben
\label{step1}
\begin{split}
\| (1- P_2) I^{-1} (1 + e^u\Delta_1)^{-1} a \Omega \|^2 &= \| (1- P_2) I^{-1} f_\tau(e^u\Delta_1) a \Omega \|^2\\
&\le \| I^{-1} f_\tau(e^u\Delta_1) a \Omega \|^2\\
&= \langle f_\tau(e^u\Delta_1) a \Omega | (1 + e^u\Delta_1) f_\tau(e^u\Delta_1) a  \Omega \rangle\\
&= \| f_\tau(e^u\Delta_1) a  \Omega \|^2 + e^u \| f_\tau(e^u\Delta_1) J_1 a^*  \Omega \|^2 \\
&= \| f_\tau(e^u\Delta_1) a  \Omega \|^2 + e^u \| f_\tau(e^u J_1\Delta_1 J_1)  a^*  \Omega \|^2 \\
&= \| f_\tau(e^u\Delta_1) a  \Omega \|^2 + e^u \| f_\tau(e^u \Delta_1^{-1})  a^*  \Omega \|^2 
\end{split} 
\een
using the definition of $I$ in third line and $S_1 = J_1 \Delta^\frac12_1$ in the fourth line and $J_1 \Delta_1 J_1 = \Delta_1^{-1}$ in 
the last line.

We claim that a $G(k)$ exists with the same fall-off for large $|k|$ as the function $g(k)$ stated in the lemma. More precisely, we state:
\begin{lemma}
There exists a smooth function $\hat G(t)$ such that $\hat G(t)=0$ for $t<-\frac12$, such that $\hat G(t)=1$ for $t>\frac12$, and such that
the (inverse) Fourier transform satisfies
\ben\label{bound}
|G(k)| \le c_0 \, e^{c_1 \Im k} e^{-|\Re k| \epsilon(|\Re k|)}
\een
for $k$ in the upper half plane, provided $|k|$ sufficiently large.
\end{lemma}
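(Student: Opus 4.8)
The plan is to realize $\hat G$ as the antiderivative of a single smooth bump and to reduce the entire statement to the existence of a compactly supported bump with a prescribed Fourier fall--off. This is a classical construction of Ingham type, and its hypothesis turns out to be precisely the Dini--type condition $\int^\infty \epsilon(k)/k\,\dd k<\infty$ assumed in Lemma \ref{lemma1}. First I would set $\phi:=\hat G'$ and require $\phi$ to be a non--negative $C^\infty$ function supported in $[-\half,\half]$ with $\int_\RR\phi=1$; then $\hat G(t)=\int_{-\infty}^t\phi(s)\,\dd s$ automatically vanishes for $t<-\half$ and equals $1$ for $t>\half$. For $\Im k>0$ the integral defining $G$ converges because $e^{ikt}\hat G(t)\to0$ as $t\to+\infty$, and an integration by parts (whose boundary terms vanish in the open upper half plane) gives $G(k)=\frac{i}{2\pi k}\Phi(k)$, where $\Phi(k):=\int_\RR e^{ikt}\phi(t)\,\dd t$ is entire with $\Phi(0)=1$. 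Since $1/|k|$ is bounded for $|k|\ge1$, the bound \eqref{bound} follows once I produce a $\phi$ with
\[
|\Phi(k)|\le c\,e^{(\Im k)/2}\,e^{-|\Re k|\,\epsilon(|\Re k|)}\qquad(\Im k\ge 0).
\]

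For the construction I would take an infinite product of cardinal--sine factors, $\Phi(k)=\prod_{n\ge1}\frac{\sin(\delta_n k)}{\delta_n k}$, with $\delta_n>0$ and $\sum_n\delta_n=\half$. Each factor is the Fourier transform of $\frac{1}{2\delta_n}\1_{[-\delta_n,\delta_n]}$, so $\phi$ is the convergent convolution of these boxes; it is non--negative, has $\int\phi=\Phi(0)=1$, and is supported in $[-\half,\half]$, while its $C^\infty$ regularity will be a by--product of the decay below. The frequencies are fixed through their counting function $n(r):=\#\{n:1/\delta_n\le r\}$, which I would let grow like $\int_1^r\epsilon(s)\,\dd s+\log r$. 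The summability $\sum_n\delta_n=\int_0^\infty n(r)/r^2\,\dd r<\infty$ then holds precisely because $\int^\infty\epsilon(r)/r\,\dd r<\infty$ (by an integration by parts, using $\epsilon(r)\to0$, hence $\int_1^r\epsilon=o(r)$), and can be normalised to $\half$ by rescaling the $\delta_n$.

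The estimate on $\Phi$ is obtained by bounding the product directly. From $|\sin w|\le e^{|\Im w|}$ one has, on each factor, $\big|\tfrac{\sin(\delta_n k)}{\delta_n k}\big|\le e^{\delta_n\Im k}\min\!\big(1,\tfrac{1}{\delta_n|\Re k|}\big)$ in the upper half plane, so that $|\Phi(k)|\le e^{(\Im k)\sum_n\delta_n}\prod_{1/\delta_n\le|\Re k|}\tfrac{1}{\delta_n|\Re k|}=e^{(\Im k)/2}\exp\!\big(-\!\sum_{1/\delta_n\le|\Re k|}\log(\delta_n|\Re k|)\big)$. Writing $\mu_n:=1/\delta_n$, the last sum is $\int_0^{|\Re k|}n(r)/r\,\dd r$, which dominates $\int_1^{|\Re k|}\big(\tfrac1r\int_1^r\epsilon\big)\dd r\gtrsim|\Re k|\,\epsilon(|\Re k|)$ because $\epsilon$ is non--increasing (so its running average on $[1,r]$ is at least $\epsilon(r)$). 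This is exactly the required real--axis decay, and the baseline $\log r$ in $n(r)$ forces $\int_0^R n(r)/r\,\dd r\gtrsim(\log R)^2$, i.e.\ faster--than--polynomial decay of $\Phi$, which guarantees $\phi\in C^\infty$ even when $\epsilon$ decays so fast that no genuine decay is demanded.

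The \textbf{main obstacle} is the content of the middle step: arranging a single counting function $n(r)$ that is simultaneously large enough to enforce the decay $e^{-|\Re k|\epsilon(|\Re k|)}$ (requiring $\int_0^R n/r\,\dd r\gtrsim R\epsilon(R)$) and small enough in the $\int n/r^2$ sense to keep the support compact. That these two demands are compatible is exactly the assertion $\int^\infty\epsilon(r)/r\,\dd r<\infty$; this threshold is what makes the lemma true and is, conversely, the structural reason why no faster (exponential, $\alpha=1$) decay can be reached. By contrast, the remaining points --- vanishing of the boundary terms in the integration by parts, convergence of the infinite product, and the passage from the real axis into the upper half plane --- are routine once the product representation is in hand, the last being immediate from the factorised bound above and requiring no Phragm\'en--Lindel\"of argument.
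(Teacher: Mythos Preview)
Your proposal is correct and follows essentially the same route as the paper: both realize $G$ as $(2\pi i k)^{-1}$ times an Ingham--type infinite product $\prod_n \tfrac{\sin(\rho_n k)}{\rho_n k}$ with $\sum_n \rho_n=\tfrac12$, so that $\hat G$ is the antiderivative of an infinite convolution of boxes. The only differences are organizational: you parametrize the sequence through its counting function $n(r)$ and extract the decay from a single product bound $\prod_{1/\delta_n\le R}(\delta_n R)^{-1}=\exp\big(-\!\int_0^R n(r)/r\,\dd r\big)$, whereas the paper fixes $\rho_n\ge e\,\epsilon(n)/n$ explicitly and splits the product into three ranges $1\le n\le\nu$, $\nu<n\le N$, $n>N$ (with $\nu=\lfloor|\Re k|\epsilon(|\Re k|)\rfloor$) to get the same exponent. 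Your counting--function bookkeeping is arguably cleaner and makes the role of the Dini condition $\int^\infty\epsilon(k)/k\,\dd k<\infty$ more transparent; the paper's three--range split is more hands--on and yields the constant $c_1$ slightly more explicitly (any $c_1>\tfrac12$).
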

\begin{remark}
The proof shows that we can choose $c_1$ to be any constant $>\frac12$.
\end{remark}
\begin{proof}
Note that $G(k)$, if it exists, must be automatically analytic in the upper half plane $\Im k > 0$.  
That such functions exist is well-known. To prove the claimed fall off for imaginary $k$, we adapt a method by Ingham \cite{ingham}. We set 
$$
G(k) = -(2\pi i)^{-1}(k+i0)^{-1} \prod_{n=1}^\infty \frac{\sin \rho_n k}{\rho_n k}. 
$$
The product converges absolutely and uniformly in each finite domain of $k$ if the series of positive terms $\sum_n \rho_n$ is 
convergent. Furthermore, $G(k)$ is analytic in the upper half plane, and the Fourier transform of $k G(k)$ has support inside the interval $t \in [-\frac12,\frac12]$
provided $\sum_n \rho_n = \frac12$, essentially because the product of sinc functions in $k$-space corresponds to an infinite convolution of top hat functions 
in $t$-space, and the $n$-th convolution increases the support by an amount $\rho_n$ (see \cite{Bochner} for details). 
It follows that the Fourier transform $\hat G(t)$ of $G(k)$ is such that we have the desired properties 
$\hat G(t) = 0$ for $t<-\frac12$ and $\hat G(t) = 1$ for $t>\frac12$. 

Now it is compatible with choices already made to 
take $\rho_n$ non-increasing with $\rho_n \ge e\epsilon(n)/n$ for $n$ exceeding some $n_0$, and we set $\nu = \lfloor |\Re k| \epsilon (|\Re k|) \rfloor$. 
Furthermore, we let $N$ be the largest natural number such that $|k| \rho_n \ge 1$. Then we split the 
product defining $G(k)$ into factors in the range a) $1 \le n \le \nu$, b) $\nu < n \le N$, and c) $N<n$. 
For sufficiently large $|\Re k|$ we then have $|\Re k| \rho_\nu  \ge e$, and the factors in the range a) can 
consequently be estimated in absolute value by: 
\ben\label{bound0}
 \prod_{n=1}^\nu \frac{e^{\rho_n \Im k}}{\rho_n |\Re k|} 
 \le  \left( \prod_{n=1}^\nu e^{\Im k \rho_n} \right) \Bigg( \frac{1}{\rho_\nu |\Re k|} \Bigg)^\nu
 \le \exp \left( \Im k \sum_{n=1}^\nu \rho_n \right) e^{-|\Re k| \epsilon(|\Re k|)+1}. 
\een
The factors in the range b) can be estimated in absolute value by $e^{\Im k\sum_{n=\nu+1}^N \rho_n} \le e^{\Im k/2}$. The factors in the range c) 
can be estimated e.g. using the infinite product for the sinc function given in \cite{product} (putting $x=k\rho_n$, so $|x|<1$ in the range c)
\ben
\left| \frac{\sin x}{x} \right| = \prod_{j=1}^\infty \left| 1- \frac{x^2}{j^2 \pi^2} \right| = \prod_{j=1}^\infty \left(1 + \frac{|x|^4}{j^4 \pi^4}  - 2\frac{|x|^2}{j^2 \pi^2} \cos (2\arg(x)) \right)^{\frac12} < 1
\een
where the last step holds provided $2\cos (2\arg(k))=2\cos (2\arg(x)) > \pi^{-2}$. Thus, provided  $\delta |\Re k| > \Im k$ for some sufficiently small $\delta>0$, 
the modulus of the factors in c) is bounded by $1$. On the other hand, in the sector $\delta |\Re k| \le \Im k$, the modulus 
of the factors in c) is estimated simply by (putting $x=k\rho_n$) $\left| \frac{\sin x}{x} \right| \le e^{|x|}$, so in that sector the 
modulus of the product of the factors in c) is at most $e^{|k|
\sum_{n=N+1}^\infty \rho_n} \le e^{|k|/2}$. 

Multiplying our bounds 
for a),b),c) gives the claimed bound \eqref{bound} 
for $k$ in the upper half plane, provided $|k|$ sufficiently large, noting that in the sector $\delta |\Re k| \le \Im k$, this bound 
is compatible with an upper bound of the form $e^{c_2|k|}$ for $c_2>0$.
\end{proof}

We now use this knowledge to gain more information about the function $f_\tau(y)$. For convenience we write $y=e^k$.  
 Applying the convolution theorem, using the Fourier transform of $[\sinh(\pi(t+i0))]^{-1}$ (see
 \eqref{fourier}), 
 and the usual behavior of 
 the Fourier transform under a shift of the argument, the definition of $f_\tau(e^k)$ can be rewritten as
 \ben
 f_\tau(e^k) = \Re \int_{-\infty}^\infty G(p) e^{ip(\tau-\frac12)} (1+e^{k-p})^{-1} \dd p
 \een
Since $G(p)$ is analytic in the upper half plane, we can evaluate the integral by means of the residue theorem. 
The poles of $(1+e^{k-p})^{-1}$ in the upper half plane are at the points
$
p=k+2\pi i(n+\frac12), \quad n = 0, 1, 2, \dots . 
$
Since $G$ satisfies the decay condition \eqref{bound}, we can close the contour when $\tau>1$. 
Application of the residue theorem then gives
\ben
 f_\tau(e^k) = 2\pi \, \Im \left( e^{ik(\tau-\frac12)} e^{-\pi (\tau-\frac12)} \sum_{n=0}^\infty G(k+2\pi i(n+\tfrac{1}{2})) e^{-2\pi n(\tau-\frac12)} \right)
\een
which converges for $\tau>1$. 
Now we apply the bound \eqref{bound} to estimate the series term-by-term. This in combination with the geometric series gives 
\ben
 |f_\tau(e^k)| \le e^{-\pi \tau}  g(k)
\een
for a function $g(k)$ with the properties claimed in the lemma.
We insert this into the right side of \eqref{step1} and apply the functional calculus. Then we immediately get
\ben
\| (1- P_2) I^{-1} (1 + e^u\Delta_1)^{-1} a \Omega \|^2 \le e^{-2\pi \tau} (\left\| g(\log \Delta_1 +u) a \Omega \right\|^2
 + e^u
\left\| g(\log \Delta_1 -u) a^* \Omega\right\|^2).
\een
Combining this with \eqref{step2} in \eqref{frede} then gives the inequality claimed in the lemma.  
\end{proof}

It is clear that the term in curly brackets in \eqref{main} is bounded for instance by
\ben
\{ \dots \} \le c(\| a\Omega\|^2 + e^u \| a^*\Omega\|^2),
\een
choosing $g$ to be constant. 
Next we need a bound on $R_a$ as defined in eq. \eqref{Fdef}. For negative $u$, we trivially get the bound $R_a(u) \le e^u \| a^* \Omega\|^2$. 
For positive $u$, we decompose $R_a = R_a^+ + R_a^-$ with
\ben
R_a^+(u) = \int_0^\infty  (1+e^{k+u})^{-1} \langle a \Omega | E_1(\dd k) a \Omega \rangle \le (1+e^u)^{-1} \|E^+_1a \Omega\|^2. 
\een
 Here, $E^+_1$ is the spectral projection for the positive part of the spectrum in the decomposition 
of $\log \Delta_1$ given by
\ben
\label{decomp}
\log \Delta_1 = \int_\RR k E_1(\dd k) , \quad E^\pm_1 = \int_{\RR_\pm} E_1(\dd k).
\een 
Similarly, still for positive $u$, we choose some continuous function $\eta: \RR_- \to \RR_+$ and estimate
\ben
\begin{split}
R_a^-(u) &= \int_{-\infty}^0  (1+e^{k+u})^{-1} \langle a \Omega | E_1(\dd k) a \Omega \rangle \\
&\le 
\left(\inf_{k \ge 0}(1+e^{-k+u})\eta(-k) \right)^{-1} \,  \langle a \Omega | \eta(E_1^-\log \Delta_1) a \Omega \rangle .
\end{split}
\een
Here, $E^-_1$ projects onto the negative part of the spectral decomposition 
of $\log \Delta_1$. We now assume that $\eta$ is chosen in such a way that $u \mapsto [\inf_{k \ge 0}(1+e^{-k+u})\eta(-k)]^{-1}$ 
is an integrable function on $\RR_+$, which is the case e.g. if $\eta$ is bounded positively away from zero and if 
we choose $\eta(-k) = O(k^n)$ with $n>1$ or $\eta(-k) = O(e^{k^\alpha})$ with $\alpha>0$
when $k \to \infty$ (see also lemma \ref{l6}). 
Then \eqref{main} yields:

\begin{lemma}
Let $\eta: \RR_- \to \RR_+$ be a continuous function such that 
$ \RR_+ \owns u \mapsto [\inf_{k \ge 0}(1+e^{-k+u})\eta(-k)]^{-1}$ is integrable, and let 
$0 \le \gamma \le 1$. There is a constant $C$ independent of $u, \tau$ such that for all $u \le 0$:
\ben
\begin{split}
0 &\le \langle a\Omega |  (1+e^u \Delta_1)^{-1} a \Omega\rangle  - \langle a\Omega | (1+e^u \Delta_2)^{-1} a \Omega\rangle \\
& \le C \, e^{-2\pi \gamma \tau} \bigg(\| a\Omega\|^2 + e^u \| a^*\Omega\|^2 \bigg)^\gamma 
\bigg(
e^u \|a^* \Omega\|^2 
\bigg)^{1-\gamma} 
\end{split}
\een
whereas for all $u \ge 0$:
\ben
\begin{split}
0 &\le \langle a\Omega |  (1+e^u \Delta_1)^{-1} a \Omega\rangle  - \langle a\Omega | (1+e^u \Delta_2)^{-1} a \Omega\rangle \\
& \le C \, e^{-2\pi \gamma \tau} \bigg(\| a\Omega\|^2 + e^u \| a^*\Omega\|^2 \bigg)^\gamma
\bigg( \frac{\langle a \Omega |  \eta(E_1^-\log \Delta_1) a \Omega \rangle}{\inf_{k \ge 0}(1+e^{-k+u})\eta(-k)}  + 
\frac{\|E^+_1a \Omega\|^2}{1+e^u} \bigg)^{1-\gamma}
\end{split}
\een
for all $a \in \gM_2$ such that $\sigma_1^t(a) \in \gM_2$ for $|t| \le \tau$ 
with the property that $a|\Omega\rangle$ is in the domain of $\eta(E_1^-\log \Delta_1)$.
\end{lemma}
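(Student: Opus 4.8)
The plan is to obtain this lemma as a direct corollary of Lemma \ref{lemma1}---specifically of the inequality \eqref{main}---specialised to the constant weight $g\equiv 1$, combined with the two elementary estimates for $R_a(u)$ sketched in the paragraph immediately preceding the statement. Concretely, I would proceed in three moves: first collapse the curly bracket in \eqref{main} under the choice $g\equiv 1$; then bound $R_a(u)^{1-\gamma}$ by the trivial estimate in the regime $u\le 0$; and finally bound $R_a(u)^{1-\gamma}$ via the spectral splitting $R_a=R_a^+ + R_a^-$ in the regime $u\ge 0$. Since the only genuinely analytic input, the decay factor $e^{-2\pi\gamma\tau}$, is already delivered by \eqref{main}, everything that remains is bookkeeping within the functional calculus of $\log\Delta_1$.

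For the first move, note that when $g$ is a constant the operators $g(\log\Delta_1\pm u)$ are scalar multiples of the identity, so $\|g(\log\Delta_1+u)a\Omega\|^2$ and $\|g(\log\Delta_1-u)a^*\Omega\|^2$ reduce to multiples of $\|a\Omega\|^2$ and $\|a^*\Omega\|^2$. Hence the curly bracket in \eqref{main} is bounded by $c(\|a\Omega\|^2+e^u\|a^*\Omega\|^2)$ for a numerical constant $c$, which I absorb into the overall constant $C$. This already produces the prefactor $e^{-2\pi\gamma\tau}$ together with the common factor $(\|a\Omega\|^2+e^u\|a^*\Omega\|^2)^\gamma$ present in both claimed inequalities, so that the task reduces to controlling $R_a(u)^{1-\gamma}$ in each sign regime of $u$.

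For $u\le 0$, reading off the definition \eqref{Fdef} of $R_a$ and using $0\le(1+e^{-u}\Delta_1)^{-1}\le 1$ (valid since $\Delta_1\ge 0$) gives $R_a(u)\le e^u\|a^*\Omega\|^2$ at once; raising to the power $1-\gamma$ and inserting into \eqref{main} yields the first inequality. For $u\ge 0$, I split $R_a$ according to the spectral decomposition \eqref{decomp} of $\log\Delta_1$ into its nonnegative and negative parts, $R_a=R_a^+ + R_a^-$. On the nonnegative part the integrand obeys $(1+e^{k+u})^{-1}\le(1+e^u)^{-1}$, whence $R_a^+(u)\le(1+e^u)^{-1}\|E_1^+a\Omega\|^2$. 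On the negative part I insert the auxiliary weight $\eta$: for $k\le 0$ one has $(1+e^{k+u})^{-1}=\eta(k)\,[(1+e^{k+u})\eta(k)]^{-1}\le\eta(k)\,[\inf_{k\ge 0}(1+e^{-k+u})\eta(-k)]^{-1}$, and integrating this against the spectral measure $\langle a\Omega|E_1(\dd k)a\Omega\rangle$ over $\RR_-$ produces $\langle a\Omega|\eta(E_1^-\log\Delta_1)a\Omega\rangle$ divided by that same infimum. Adding the two contributions and raising to the power $1-\gamma$ gives the second inequality.

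The step requiring the most care---though it remains routine---is the final functional-calculus manipulation: the infimum $\inf_{k\ge 0}(1+e^{-k+u})\eta(-k)$ must be strictly positive for the division to be legitimate (guaranteed by the continuity and positivity of $\eta$), and $a\Omega$ must lie in $\dom(\eta(E_1^-\log\Delta_1))$ for the resulting expectation value to be finite, which is precisely the hypothesis imposed on $a$. All the real analytic difficulty---in particular the exponential gain $e^{-2\pi\gamma\tau}$ arising from the residue/contour estimate---has already been discharged in the proof of Lemma \ref{lemma1}, so the present statement merely repackages that estimate with a concrete, sign-adapted bound on $R_a$. I note finally that the integrability hypothesis on $u\mapsto[\inf_{k\ge 0}(1+e^{-k+u})\eta(-k)]^{-1}$ is not actually needed to establish the pointwise-in-$u$ bounds above; it is stated because it is what makes the estimate useful in the subsequent application, where one integrates over $u$ to reconstruct the relative entropy.
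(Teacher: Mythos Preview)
Your proposal is correct and follows essentially the same route as the paper: the lemma is obtained by specialising \eqref{main} to constant $g$, bounding the curly bracket by $c(\|a\Omega\|^2+e^u\|a^*\Omega\|^2)$, and then estimating $R_a(u)$ trivially for $u\le 0$ and via the spectral splitting $R_a=R_a^++R_a^-$ with the $\eta$-weight for $u\ge 0$. Your observation that the integrability hypothesis on $\eta$ is not needed for the pointwise bound but only for the subsequent $u$-integration is also accurate.
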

We now integrate the inequalities from this lemma against $u$ and use on the left side the operator identity
\ben
\log \Delta_2 - \log \Delta_1 =  
\int_{-\infty}^\infty  \left(
\frac{1}{1+ e^u \Delta_1}
-
\frac{1}{1+ e^u \Delta_2}
\right) \dd u \ ,
\een
where the integral is understood in the Cauchy principal value sense in the strong operator topology (it may not exist when applied to a vector not 
in the domain of both $\log \Delta_j$). The integration range is split into the following parts:
$u \in (-\infty, 0), [0, \pi\tau), [\pi\tau, \infty)$. For the first region, we take $\gamma = \frac12$, for the second region, we take $\gamma = 1$, 
and for the third region, we take $\gamma = 0$. To get a non-trivial bound, 
$\eta $ is chosen such that $u \mapsto [\inf_{k \ge 0}(1+e^{-k+u})\eta(-k)]^{-1}$ is an integrable function
and such that $a|\Omega\rangle$ is in the domain of $\eta(E_1^-\log \Delta_1)$. If we also take $a=U$ to be a unitary (implying that $\|U \Omega \| = 1 = 
\|U^* \Omega\|$), then we immediately obtain the following theorem. 

\begin{theorem}\label{th2}
Let $\eta: \RR_- \to \RR_+$ be a continuous function such that 
$\RR_+ \owns u \mapsto [\inf_{k \ge 0}(1+e^{-k+u})\eta(-k)]^{-1}$ is integrable.
There exists a constant $C$ not depending on 
$u,\tau$ such that
\ben
%\begin{split}
| \langle U\Omega | (\log \Delta_1) U \Omega \rangle - \langle U\Omega | (\log \Delta_2) U \Omega \rangle |
\le C \, K_\eta(\tau) (1+ \langle U \Omega |  \eta(E_1^-\log \Delta_1) U \Omega \rangle ), 
%\end{split}
\een
for any unitary $U \in \gM_2$ such that $\sigma_1^t(U) \in \gM_2$ for $|t| \le \tau$, 
where 
\ben
K_\eta(\tau) =  \int_{\pi \tau}^\infty \left( \inf_{k \ge 0}(1+e^{-k+u})\eta(-k)\right)^{-1} \dd u .
\een
\end{theorem}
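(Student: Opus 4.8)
The plan is to obtain the theorem by integrating the two-sided resolvent bound of the preceding lemma over $u \in \RR$, sandwiched between $U\Omega$, and feeding it into the operator identity
\ben
\log \Delta_2 - \log \Delta_1 = \int_{-\infty}^\infty \left( \frac{1}{1+e^u\Delta_1} - \frac{1}{1+e^u\Delta_2}\right)\dd u .
\een
The crucial simplification is that the lemma already supplies the \emph{left} inequality $0 \le \langle U\Omega|(1+e^u\Delta_1)^{-1}U\Omega\rangle - \langle U\Omega|(1+e^u\Delta_2)^{-1}U\Omega\rangle$, so the integrand is pointwise non-negative. This lets me replace the Cauchy principal value by an ordinary integral, interchange it with the expectation value by monotone convergence, and drop the absolute value, since then $|\langle U\Omega|(\log\Delta_1-\log\Delta_2)U\Omega\rangle| = \langle U\Omega|(\log\Delta_2-\log\Delta_1)U\Omega\rangle \ge 0$. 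Choosing $a=U$ unitary fixes $\|U\Omega\|=\|U^*\Omega\|=1$ and trivialises the prefactors of the lemma. (I will assume throughout that $U\Omega$ lies in $\dom(\log\Delta_j)$ and in $\dom(\eta(E_1^-\log\Delta_1))$, since otherwise the two sides are infinite and there is nothing to prove.)

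Next I would split $\RR$ into $(-\infty,0)$, $[0,\pi\tau)$, $[\pi\tau,\infty)$ and apply the lemma with $\gamma=\tfrac12,1,0$ respectively. On $(-\infty,0)$ the $u\le0$ bound gives an integrand $\le Ce^{-\pi\tau}(1+e^u)^{1/2}e^{u/2}$, and since $(1+e^u)^{1/2}\le\sqrt2$ there the $u$-integral is a finite constant, contributing $O(e^{-\pi\tau})$. On $[0,\pi\tau)$ with $\gamma=1$ the second factor of the $u\ge0$ bound is trivial, the integrand is $\le Ce^{-2\pi\tau}(1+e^u)$, and integration yields $Ce^{-2\pi\tau}(\pi\tau+e^{\pi\tau}-1)=O(e^{-\pi\tau})$. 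On $[\pi\tau,\infty)$ with $\gamma=0$ the first factor is trivial and the $u\ge0$ bound reduces to
\ben
C\left( \frac{\langle U\Omega|\eta(E_1^-\log\Delta_1)U\Omega\rangle}{\inf_{k\ge0}(1+e^{-k+u})\eta(-k)} + \frac{\|E_1^+U\Omega\|^2}{1+e^u}\right);
\een
integrating the first term reproduces exactly $K_\eta(\tau)\,\langle U\Omega|\eta(E_1^-\log\Delta_1)U\Omega\rangle$, while the second is bounded using $\|E_1^+U\Omega\|\le1$ by $\int_{\pi\tau}^\infty(1+e^u)^{-1}\dd u=O(e^{-\pi\tau})$.

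Collecting the three pieces gives a bound of the form $C\,K_\eta(\tau)\langle U\Omega|\eta(E_1^-\log\Delta_1)U\Omega\rangle + C'e^{-\pi\tau}$. To reach the stated form $C\,K_\eta(\tau)\bigl(1+\langle\cdots\rangle\bigr)$ I must dominate the residual $e^{-\pi\tau}$ by $K_\eta(\tau)$. This follows from a crude lower bound: taking $k=0$ in the infimum gives $\inf_{k\ge0}(1+e^{-k+u})\eta(-k)\le(1+e^u)\eta(0)$ with $\eta(0)<\infty$ by continuity, whence $K_\eta(\tau)\ge \eta(0)^{-1}\int_{\pi\tau}^\infty(1+e^u)^{-1}\dd u = \eta(0)^{-1}\log(1+e^{-\pi\tau})\ge c\,e^{-\pi\tau}$ for large $\tau$. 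Absorbing the exponential remainders then produces the claimed inequality with a constant independent of $\tau$ (and of $U$).

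I expect the delicate points to be bookkeeping rather than conceptual: the interchange of the $u$-integral with the expectation value (clean, by non-negativity of the integrand and Tonelli, even with the principal-value prescription) and the domain caveats noted above. The one easily overlooked step is the final absorption of the three $e^{-\pi\tau}$ remainders into $K_\eta(\tau)$; the $k=0$ comparison is exactly what makes this work for every admissible $\eta$, and I would isolate it so that the constant $C$ is manifestly $\tau$- and $U$-independent.
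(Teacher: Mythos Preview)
Your proposal is correct and follows essentially the same route as the paper: the same resolvent identity, the same three-fold split $(-\infty,0)\cup[0,\pi\tau)\cup[\pi\tau,\infty)$ with the same choices $\gamma=\tfrac12,1,0$, and $a=U$ unitary. The one point you make explicit that the paper leaves implicit is the absorption of the $O(e^{-\pi\tau})$ remainders into $K_\eta(\tau)$ via the $k=0$ comparison; the paper only alludes to this in the subsequent remark (``$K_\eta(\tau)$ \dots\ never [decays] faster than $e^{-\pi\tau}$''), so your version is in fact slightly more complete on this score.
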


\begin{remark}
\label{improv}
Due to our assumption on $\eta$, $K_\eta(\tau)$ goes to zero as $\tau \to \infty$, but never faster than $e^{-\pi \tau}$. Variants of the above bound can
be obtained by taking the second integration region instead to be $(0,c\pi \tau]$, where $c$ is strictly between 1 and 2. This can lead to some improvements depending on the choice of $\eta$, which we will not discuss here for simplicity.
\end{remark}

More explicit bounds are obtained by choosing specific examples for the function $\eta$. For instance, we have the following elementary lemma:

\begin{lemma}\label{l6}
If a continuous function $\eta: \RR_- \to \RR_+$ satisfies $\eta(-k) = O(k^n)$ as $k \to \infty$ for some fixed $n>1$, then $K_\eta(\tau) = O(\tau^{-n+1})$, or if $\eta(-k) = O(e^{k^\alpha})$ as $k \to \infty$ for some fixed $0<\alpha \le 1$, then $K_\eta(\tau) = O(\tau^{1-\alpha}e^{-(\pi\tau)^\alpha})$ as $\tau \to \infty$. 
\end{lemma}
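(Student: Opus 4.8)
The plan is to reduce the lemma to a two-sided estimate of the integrand and then to a pair of elementary one-dimensional integrals. Write $m(u) = \inf_{k\ge 0}(1+e^{u-k})\eta(-k)$, so that $K_\eta(\tau) = \int_{\pi\tau}^\infty m(u)^{-1}\,\dd u$. The upper bound on $m$ is immediate: the test value $k=u$ gives $m(u)\le(1+1)\eta(-u)=2\eta(-u)$, which is $O(u^n)$ resp. $O(e^{u^\alpha})$ by hypothesis, and this produces the lower bound on $K_\eta$ needed for the two-sided statement. The real content is the matching lower bound $m(u)\ge c\,\eta(-u)$ for large $u$, from which the useful upper bound on $K_\eta$ follows.

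To bound $m(u)$ from below I would split the infimum over $k$ according to the position of $k$ relative to $u$. For $k\ge u$ one has $(1+e^{u-k})\eta(-k)\ge\eta(-k)$, which is $\ge c'k^n\ge c'u^n$ (resp. $\ge c'e^{k^\alpha}\ge c'e^{u^\alpha}$) by the growth hypothesis. For $k$ in a fixed bounded window $[0,K_0]$ the factor $\eta(-k)$ is bounded below by $\eta_0=\inf_{k\ge0}\eta(-k)>0$ (positive because $\eta$ is continuous, positive, and blows up at infinity), while $e^{u-k}\ge e^{u-K_0}$ dominates, up to a constant, the target growth $u^n$ or $e^{u^\alpha}$ for large $u$; this window is therefore harmless. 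The decisive window is $K_0\le k<u$, where I would use $1+e^{u-k}\ge e^{u-k}$ together with the asymptotics of $\eta$ to reduce matters to minimizing $u-k+\log\eta(-k)$. In the polynomial case this amounts to checking that $e^{u-k}(k/u)^n\ge c$; in the exponential case it amounts to $\inf_{K_0\le k<u}\bigl(u-k+k^\alpha\bigr)\ge u^\alpha$, which holds because $h(k)=u-k+k^\alpha$ is unimodal (for $\alpha<1$ it increases then decreases, as $h'(k)=-1+\alpha k^{\alpha-1}$ vanishes once), so its infimum on $[K_0,u]$ equals $u^\alpha$, attained at $k=u$ (the function being constant when $\alpha=1$). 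Combining the three windows yields $m(u)=\Theta(u^n)$ resp. $\Theta(e^{u^\alpha})$, i.e. $m(u)\asymp\eta(-u)$.

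It then remains to integrate. In the polynomial case $m(u)^{-1}=O(u^{-n})$ gives $K_\eta(\tau)\le C\int_{\pi\tau}^\infty u^{-n}\,\dd u=\frac{C}{n-1}(\pi\tau)^{1-n}=O(\tau^{-n+1})$, where convergence uses $n>1$. In the exponential case $m(u)^{-1}=O(e^{-u^\alpha})$ leaves $\int_{\pi\tau}^\infty e^{-u^\alpha}\,\dd u$, which I would evaluate either by the substitution $w=u^\alpha$, turning it into $\tfrac1\alpha\Gamma(1/\alpha,(\pi\tau)^\alpha)$ and invoking $\Gamma(s,x)\sim x^{s-1}e^{-x}$, or more elementarily by one integration by parts, $\int_X^\infty e^{-u^\alpha}\dd u=\tfrac1\alpha X^{1-\alpha}e^{-X^\alpha}+\tfrac{1-\alpha}{\alpha}\int_X^\infty u^{-\alpha}e^{-u^\alpha}\dd u$ with $X=\pi\tau$, the remainder being of strictly lower order. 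Either route gives $K_\eta(\tau)=O\bigl(\tau^{1-\alpha}e^{-(\pi\tau)^\alpha}\bigr)$; the case $\alpha=1$ degenerates to $\int_{\pi\tau}^\infty e^{-u}\dd u=e^{-\pi\tau}$, consistent with the claimed formula.

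I expect the main obstacle to be the uniform lower bound on $m(u)$, i.e. controlling the competition between the exponential weight $e^{u-k}$ and the growth of $\eta$ across all $k\ge0$ at once, and in particular ruling out that the minimizing $k$ escapes into the bounded region where the assumed asymptotics of $\eta$ need not hold. The unimodality of $h(k)=u-k+k^\alpha$ is what pins the relevant minimum to $k\approx u$ and thereby identifies $m(u)\asymp\eta(-u)$; once this is in place the integrations are routine, and the polynomial prefactor $\tau^{1-\alpha}$ in the exponential case is produced entirely by the integration step, not by $m$.
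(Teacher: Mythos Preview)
Your proposal is correct and follows essentially the same strategy as the paper: bound $m(u)$ from below by a case split on $k$, then integrate the resulting $O(e^{-u^\alpha})$ (resp.\ $O(u^{-n})$) over $[\pi\tau,\infty)$, invoking the incomplete Gamma asymptotics in the exponential case. Your three-window split at $K_0$ and $u$ together with the unimodality of $h(k)=u-k+k^\alpha$ differs only cosmetically from the paper's two-window split at the fixed threshold $(2\alpha)^{1/(1-\alpha)}$ and its stationary-point analysis of $(1+e^{-k+u})e^{k^\alpha}$; you additionally spell out the polynomial case and the matching lower bound on $K_\eta$, which the paper leaves implicit.
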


\begin{proof}
We give a proof of this lemma in the second case. Our conventions for the big-$O$-notation 
mean that $(1+e^{-k+u})\eta(-k) \ge C(1+e^{-k+u})e^{k^\alpha}$ for some constant $C>0$. Consider first the 
case that $0 \le k \le (2\alpha)^{1/(1-\alpha)}$. Then we get $(1+e^{-k+u})\eta(-k) \ge C\exp(-(2\alpha)^{1/(1-\alpha)}) e^u \ge O(e^{u^\alpha})$
when $u \to \infty$. In the other case when $k > (2\alpha)^{1/(1-\alpha)}$, the infimum of $k \mapsto (1+e^{-k+u}) e^{k^\alpha}$ is either attained 
for $k=(2\alpha)^{1/(1-\alpha)}$ -- which we have already discussed -- or at a stationary point $k_0$. 
Computing the derivative of this function  and using the condition $k > (2\alpha)^{1/(1-\alpha)}$, we find that at
the stationary point, we must have $u \le k_0$. But then $(1+e^{-k+u})\eta(-k) \ge C(1+e^{-k_0+u})e^{k^\alpha_0} \ge O(e^{u^\alpha})$, again, 
when $u \to \infty$. These two cases imply that $[\inf_{k \ge 0}(1+e^{-k+u})\eta(-k)]^{-1} \le O(e^{-u^\alpha})$. Thus, there exists a constant $c$
such that 
\ben
K_\eta(\tau) \le c  \int_{\pi \tau}^\infty e^{-u^\alpha} \dd u = \frac{c}{\alpha} \Gamma\left(\frac{1}{\alpha}, (\pi \tau)^{\alpha} \right) =
O(\tau^{1-\alpha}e^{-(\pi \tau)^\alpha})
\een
as $\tau \to \infty$. Here $\Gamma(p,y)$ is the incomplete Gamma function. The other case is treated similarly. 
\end{proof}

Combining theorem \ref{th2} with this lemma, we immediately get:

\begin{proposition}\label{prop1}
Let $U \in \gM_2$ be a unitary such that $\sigma_1^t(U) \in \gM_2$ for $|t| \le \tau$. 
\begin{enumerate}
\item For fixed $n>1$ we have that
\ben
\begin{split}
& | \langle U\Omega | (\log \Delta_1) U \Omega \rangle - \langle U\Omega | (\log \Delta_2) U \Omega \rangle | \\
\le & O(\tau^{-n+1}) \langle U \Omega |  (1-E_1^-\log \Delta_1)^n U \Omega \rangle , 
\end{split}
\een
for large $\tau$ uniformly in $U$. 
\item For fixed $0<\alpha\le 1$ we have that
\ben
\begin{split}
&| \langle U\Omega | (\log \Delta_1) U \Omega \rangle - \langle U\Omega | (\log \Delta_2) U \Omega \rangle | \\
\le & O(\tau^{1-\alpha}e^{-(\pi \tau)^\alpha}) \, \langle U \Omega |  \exp[(-E_1^-\log \Delta_1)^\alpha] U \Omega \rangle , 
\end{split}
\een
for large $\tau$ uniformly in $U$. 
\end{enumerate}
\end{proposition}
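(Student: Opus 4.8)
The plan is to obtain both parts of Proposition~\ref{prop1} directly from Theorem~\ref{th2} by making the two natural choices for the auxiliary function $\eta$, and then to control the prefactor $K_\eta(\tau)$ using Lemma~\ref{l6}. Throughout I would keep in mind that $E_1^-\log\Delta_1 \le 0$, so that $-E_1^-\log\Delta_1 = H_- \ge 0$ and any function of it is evaluated on the nonnegative half-line; writing a spectral value of $E_1^-\log\Delta_1$ as $-k$ with $k\ge 0$ is what lets me match $\eta(-k)$ to the operator appearing in the claim. A preliminary observation is that if $U\Omega$ fails to lie in $\dom \eta(E_1^-\log\Delta_1)$, then the right-hand sides of both claimed inequalities equal $+\infty$ and there is nothing to prove; hence I may assume $U\Omega \in \dom \eta(E_1^-\log\Delta_1)$, which is exactly the regime in which Theorem~\ref{th2} gives a finite bound.

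For the first part I would take $\eta(-k) = (1+k)^n$. This is continuous and positive on $\RR_-$ and bounded away from zero, and with the spectral substitution above one has $\eta(E_1^-\log\Delta_1) = (1-E_1^-\log\Delta_1)^n$, which is precisely the operator in the statement. Since $\eta(-k)=O(k^n)$ as $k\to\infty$ with $n>1$, Lemma~\ref{l6} guarantees both that $u \mapsto [\inf_{k\ge 0}(1+e^{-k+u})\eta(-k)]^{-1}$ is integrable (so Theorem~\ref{th2} applies) and that $K_\eta(\tau)=O(\tau^{-n+1})$. Feeding these into Theorem~\ref{th2} produces the bound with prefactor $O(\tau^{-n+1})$ multiplying $(1+\langle U\Omega|(1-E_1^-\log\Delta_1)^n U\Omega\rangle)$. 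The additive $1$ is then harmless: because $(1-E_1^-\log\Delta_1)^n \ge 1$, one has $\langle U\Omega|(1-E_1^-\log\Delta_1)^n U\Omega\rangle \ge \|U\Omega\|^2 = 1$, so $1+\langle\cdots\rangle \le 2\langle\cdots\rangle$ and the factor $2$ is absorbed into the $O(\cdot)$, yielding exactly the stated inequality. The constant $C$ in Theorem~\ref{th2} and the implicit constant in Lemma~\ref{l6} depend only on $n$ and on the fixed inclusion $\gM_2\subset\gM_1$, never on $U$, which is what secures uniformity in $U$ as $\tau\to\infty$.

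For the second part the identical template applies with $\eta(-k)=e^{k^\alpha}$, for which $\eta(E_1^-\log\Delta_1)=\exp[(-E_1^-\log\Delta_1)^\alpha]$ is the operator appearing in the claim. Here $\eta(-k)=O(e^{k^\alpha})$ with $0<\alpha\le 1$, so Lemma~\ref{l6} again certifies the integrability hypothesis and gives $K_\eta(\tau)=O(\tau^{1-\alpha}e^{-(\pi\tau)^\alpha})$; since $\exp[(-E_1^-\log\Delta_1)^\alpha]\ge 1$, the additive $1$ is absorbed exactly as before, and the uniformity argument is unchanged.

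I expect no genuine obstacle here: the only substantive points are checking that each chosen $\eta$ meets the integrability condition of Theorem~\ref{th2} and pinning down the decay rate of $K_\eta(\tau)$, and both are precisely the content of Lemma~\ref{l6}, already in hand. Everything else is bookkeeping, namely matching $\eta$ to the target operator through the spectral substitution and absorbing the benign additive constant. Thus Proposition~\ref{prop1} follows as an immediate corollary of Theorem~\ref{th2} and Lemma~\ref{l6}.
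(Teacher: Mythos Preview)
Your proposal is correct and follows exactly the paper's approach: the paper simply states that Proposition~\ref{prop1} follows immediately by combining Theorem~\ref{th2} with Lemma~\ref{l6}, and you have spelled out precisely the intended choices $\eta(-k)=(1+k)^n$ and $\eta(-k)=e^{k^\alpha}$, the spectral identification with the operators in the statement, and the absorption of the additive $1$ via $\eta(E_1^-\log\Delta_1)\ge 1$. There is nothing to add.
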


We remark that if the assumption of the proposition is satisfied for $U$, then also for $U^*$. In sec. \ref{sect1}, we apply the proposition to $U^*$.

\section{Applications to quantum field theory}
\label{sect3}

We now apply the abstract result Thm. \ref{thm1} in the context of quantum field theory (QFT). 
In the algebraic formulation of QFT, the algebraic relations between the quantum fields are encoded in a collection of $C^*$- or v. Neumann algebras associated with spacetime regions. The precise framework depends somewhat on the type of theory, spacetime background etc. one would like to consider.

In the case of Minkowski space $\bM=\RR^{d,1}$, a standard set of assumptions, manifestly satisfied by many examples, and believed to be satisfied by all reasonable QFTs, is as follows. Call a ``causal diamond'' $O \subset \bM$ any set of the form $O = D(A)$, where $A$ is any open subset of a Cauchy surface $\cong \RR^d$, and $D(A)$ its domain of dependence, i.e. the set of points $x \in \bM$ such that any inextendible causal curve through $x$ must hit $A$ once, see \cite{wald_3} for further details on these concepts. This is illustrated in fig.~\ref{fig:diamond}. 

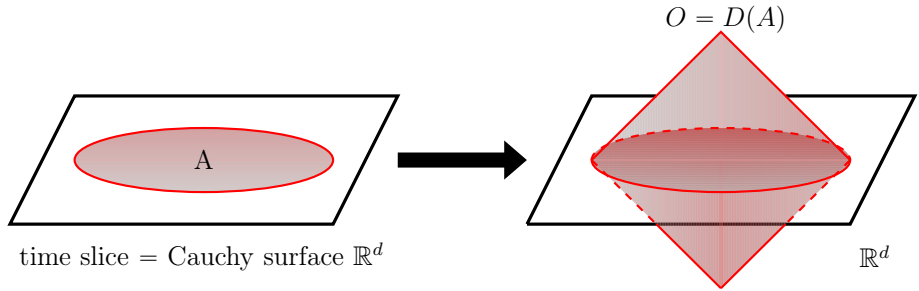
\begin{figure}[h!]
\begin{center}
\begin{tikzpicture}[scale=.85, transform shape]
	% space
	\draw[color=black, very thick] (-4, -1) -- (1, -1) -- (2, 1) -- (-3, 1) -- cycle;
	% base
	\fill[
	      top color=red!60,
	      bottom color=red!10,
	      shading=axis,
	      opacity=0.25
	      ]
	    (-1,0) circle (2cm and 0.5cm);
	    % boundary
	    \draw[red, thick] (-3, 0) arc (180: 360: 2cm and 0.5cm) -- (1, 0);
	    \draw[red, thick] (-3, 0) arc (180: 0: 2cm and 0.5cm);
	
	% big thick arrow
	\draw[line width=2mm,>={Triangle[length=3mm,width=5mm]},->] (2, 0) -- (4, 0);
	
	% double cone
	% space
	%\draw[color=black, very thick] (4, -1) -- (9, -1) -- (10, 1) -- (5, 1) -- cycle;
	\draw[color=black, very thick] (4, -1) -- (9, -1) -- (10, 1) -- (8, 1);
	\draw[color=black, very thick] (4, -1) -- (5, 1) -- (6, 1);
	% upward cone
	\fill[
	      left color=red!0,
	      right color=red!60,
	      middle color=red!30,
	      shading=axis,
	      opacity=0.25
	    ]
	    (5, 0) -- (7, 2) -- (9, 0) arc (0: 0: 2cm and 0.5cm);
	% downward cone
	\fill[
	      left color=red!10,
	      right color=red!30,
	      middle color=red!60,
	      shading=axis,
	      opacity=0.25
	    ]
	    (5, 0) -- (7, -2) -- (9, 0) arc (0: 0: 2cm and 0.5cm);
	    % base
	\fill[
	      top color=red!60,
	      bottom color=red!10,
	      shading=axis,
	      opacity=0.5
	      ]
	    (7,0) circle (2cm and 0.5cm);
	% boarder
	\draw[red, thick] (5, 0) arc (180: 360: 2cm and 0.5cm) -- (7, 2) -- cycle;
	\draw[color=red, dashed, thick] (5, 0) arc (180: 0: 2cm and 0.5cm);
	\draw[color=red, dashed, thick] (5, 0) -- (6, -1);
	\draw[color=red, dashed, thick] (9, 0) -- (8, -1);
	\draw[color=red, thick] (6, -1) -- (7, -2);
	\draw[color=red, thick] (8, -1) -- (7, -2);
	
	% labelling
	\node at (-1, 0) {A};
	\node[right] at (-4, -1.5) {time slice $=$ Cauchy surface $\RR^d$};
	\node[right] at (6, 2.2) {$O=D(A)$};
	\node[right] at (9, -1.5) {$ \RR^d$};
    \end{tikzpicture}

    \end{center}
    \caption{Causal diamond associated with $A$.}
\label{fig:diamond}
    \end{figure}

Poincar\' e transformations $g=(\Lambda, a) \in {\rm SO}_0(d,1) \ltimes \RR^{d+1}$ act on points in $\bM$ by $g\cdot x = \Lambda x + a$. Since Poincar\' e transformations are isometries of Minkowski spacetime, they map causal diamonds to causal diamonds, so we get  an action $O \mapsto g \cdot O$ on the set of causal diamonds.

Abstractly, a QFT  can be thought of as a collection (``net'') of $C^*$-algebras $\A(O)$ subject to the following conditions \cite{haag_1,haag_2}:

\begin{enumerate}
\item[a1)] (Isotony) $\A(O_1) \subset \A(O_2)$ if $O_1 \subset O_2$. We write $\A = \overline{\bigcup_O \A(O)}$ with completion in the $C^*$-norm.
\item[a2)] (Causality) $[\A(O_1),\A(O_2)]=\{0\}$ if $O_1$ is space-like related to $O_2$. In other words, algebras for space-like related double cones commute. Denoting the causal complement of a set $O$ by $O'$, we may also write this more suggestively as
$$
\A(O') \subset \A(O)'
$$
where the prime on the right side is the commutant.
\item[a3)] (Relativistic covariance) For each Poincare transformation $g \in \cP = {\rm Spin}_0(d,1) \ltimes \RR^{d+1}$ covering\footnote{The covering group is needed to describe non-integer spin.} a Poincar\'e transformation $(\Lambda,a) \in  {\rm SO}_0(d,1) \ltimes \RR^{d+1}$, there is an automorphism $\alpha_g$ on $\A$ such that $\alpha_g \A(O) = \A(g \cdot O)$ for all causal diamonds $O$ and such that $\alpha_g \alpha_{g'} = \alpha_{gg'}$ and $\alpha_{(1,0)}=\id$ is the identity.
\item[a4)] (Vacuum) There is a unique state $\omega_0$ on $\A$ invariant under $\alpha_g$. On its GNS-representation $(\pi_0, \H_0, |0\rangle)$, $\alpha_g$ is implemented by a projective positive energy representation $U$ of $\cP$ in the sense that $\pi_0(\alpha_g(a)) = U(g) \pi_0(a)U(g)^*$ for all $a \in \A, g \in \cP$. Positive energy means that the representation is strongly continuous, and that, if $x \in \bM \subset \cP$ is a translation by $x$, so that we can write
\ben
U(x) = \exp(-i P^\mu x_\mu),
\een
the vector generator $P=(P^\mu)$ has spectral values $p=(p^\mu)$ in the forward lightcone
 $p \in \bar{V}^+ = \{ k \in \RR^{d,1} \mid -(k^0)^2 + (k^1)^2 + \dots + (k^d)^2 \le 0, k^0>0\}$.
 \item[a5)] (Additivity) We assume that any element of $\A$ can be approximated arbitrarily well 
in the sense of matrix elements in the vacuum representation $\pi_0$ by finite sums of elements of the form $\alpha_{x_i}(a_i)$, where $a_i$ 
are in some arbitrarily small double cone, and where $\alpha_{x_i}$ denotes a translation by $x_i \in \bM$.
\end{enumerate}

For technical reasons, one often forms the weak closures $\gM(O) = [\pi(\A(O))]''$ of representations $\pi$ of the observable algebras. 
The double prime means the twice repeated commutant which if we start with 
a v. Neumann algebra would give back the algebra itself, and otherwise gives the smallest v. Neumann algebra containing the algebra we started with.
This gives a, in general representation dependent, net of v. Neumann algebras (on the respective representation Hilbert space $\H$). 

A straightforward, but important, consequence of axioms a1)-a5)  is the Reeh-Schlieder theorem \cite{reeh}, which is the following. We know by construction that $\pi_0(\A) |0 \rangle$ is dense in the entire Hilbert space, $\H_0$. One might guess at first that the subspace of states $\pi_0(\A(O)) |0 \rangle$, describing excitations relative to the vacuum localized in a double cone $O$, would depend on $O$. This expectation is  incorrect, however, and instead the Reeh-Schlieder theorem holds:
For any double cone $O$, the set of vectors $\pi_0(\A(O)) |0\rangle$ is dense in the entire Hilbert space. The same statement remains true if $|0\rangle$ is 
replaced with a vector with finite energy or by a KMS-state. 

The Reeh-Schlieder theorem implies that the vacuum vector $|0\rangle$ in the vacuum representation, or the vector representative $|0_\beta\rangle$ of a KMS state in a thermal representation is cyclic and separating for any double cone, so we are naturally in the setting of sec. \ref{sect1} and thm. \ref{thm1}. We now discuss these examples. 

\subsection{Touching regions in vacuum}

Consider first the following geometric situation: $\textcolor{red}{A_2} \subset \RR^d$ is some region in a spatial slice $\RR^d$ having $x^0=0$, 
$\textcolor{green}{A_1} \subset \RR^d$ is a half-plane in the same slice, e.g. $A_1= \{x^0=0, x^1 > 0 \}$. It is assumed that $A_2 \subset A_1$ and that both regions touch
at one boundary point, taken to be $0$ without loss of generality.
$O_j = D(A_j), j=1,2$ are the corresponding causal diamonds. 
We choose the vacuum representation $\pi_0$ (see a4) of the net, and set 
\ben
\label{Mdef}
\gM_j = \pi_0(\A(O_j))'',  \quad \H = \H_0, \quad |\Omega\rangle = |0\rangle, \quad \omega = \omega_0. 
\een  
Consider now a third region $B \subset A_2$ which is a ball of diameter $1$ centered at $(\frac12,0,0 \dots,0)$ (here we mean a point in a spatial slice $\RR^d$ having $x^0=0$). Then $\ell B$ is a region inside $A_2$ tangent to the point $0$ where the regions $A_1$ and $A_2$ touch each other and at the same time shrinking to zero size as $\ell \to 0$, see fig. \ref{fig:concentric}. 

\begin{figure}[h!]
\begin{center}

 \begin{tikzpicture}[scale=.65, transform shape]
        % drawing the boundaries
        \filldraw[color=green!30, fill=green!10, very thick] (7, 4) -- (7, -4) -- (-1, -4) -- (-1, 4)-- cycle;
        \filldraw[color=green!10, fill=white, very thick] (0, 4) -- (0, -4) -- (-1, -4) -- (-1, 4)-- cycle;
        \draw[line width=1pt, color=green!60] (0,-4) -- (0,4);
	\filldraw[color=red, fill=red!10, very thick](3,0) circle (3);
	\filldraw[color=gray!60, fill=gray!10, very thick](.5,0) circle (.5);
	
	% radius of the circles
	\draw[line width=1pt, >={Triangle[length=1mm,width=2mm]}, ->] (0,-.6) -- (1, -.6);
	\draw[line width=1pt, >={Triangle[length=1mm,width=2mm]}, ->] (1,-.6) -- (0, -.6);
	\filldraw (-.0,0) circle (2pt);
	%\draw[line width=1pt, >={Triangle[length=1mm,width=2mm]}, ->] (2,0) -- (0, 0);
	
	% labelling
	\node[below] at (4.2, 0) {$A_2$};
	\node[above] at (.5, -0.3) {\textcolor{gray}{$\ell B$}};
	\node[above] at (6.5, 1) {$A_1$};
	\node[below] at (.5, -0.7) {$ \ell $};
	\node[left] at (-.1, 0) {$0$};
    \end{tikzpicture}

   \end{center}
    \caption{The regions $A_j, \ell B$.}
\label{fig:concentric}
    \end{figure}
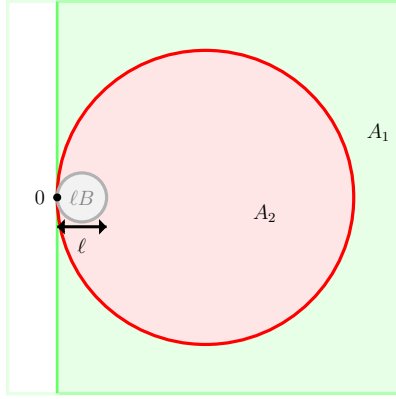

Consider furthermore a sequence of unitaries $U_\ell$ each of which is contained in the algebra associated with the double cone $D(\ell B)$, see see fig. \ref{fig:concentric}. In order to apply thm. \ref{thm1}, we need to know the maximum ``Rindler time'' value  $\tau$ such that $\sigma_1^t(U_\ell)$, the modular flow of the wedge, stays within the cone algebra $\gM_2$ for all $|t|\le \tau$. Because the modular flow of the wedge acts geometrically  by a 1-parameter family of boosts in the $(x^0,x^1)$-plane by the Bisognano-Wichmann theorem \cite{bisognano}, 
\ben
\label{boosts}
\sigma_1^t = \alpha_{\Lambda(t)}, \quad \Lambda(t) = 
\left(
\begin{matrix}
\sinh 2\pi t & \cosh 2\pi t & 0 & \dots & 0 \\
\cosh 2\pi t & \sinh 2\pi t & 0 & \dots & 0 \\
0 & 0 & 1 & \dots & 0 \\
\vdots &  &  &  & \vdots \\
0 & 0 & 0 & \dots  & 1
\end{matrix}
\right),
\een
the answer can be found without difficulty. It is exactly $ \tau = (2\pi)^{-1} |\log \ell/2R|$ if $A_2$ is a ball of radius $R$ touching the half space $A_1$
in the origin, see fig. \ref{fig:concentric}. 

So if we write 
\ben
\omega_{\ell}(a) \equiv \omega(U^*_\ell a U_\ell^{}) 
\een 
for the state excited by $U_\ell$
(represented by the vector $U_\ell |0\rangle$) and $\omega$ for the vacuum state (represented by the vector $|0\rangle$), 
case 2) of thm. \ref{thm1} gives in this case for instance
\ben
\label{Sineq}
| S_1(\omega / \omega_{\ell}) -  S_2(\omega / \omega_{\ell}) | 
\le  O\bigg(
 (\log \sqrt{R/\ell})^{1-\alpha} \exp \left[-  (\log \sqrt{R/\ell})^\alpha \right] \bigg)
 \quad \text{ as $\ell \to 0^+$,}
\een
 assuming in this case that our unitaries $U_\ell$ have been chosen e.g. such that 
\ben
\label{Mcond}
\| e^{|M|^\alpha} U_\ell^* |0\rangle \| \le C \quad \text{as $\ell \to 0^+$.}
\een 
Here $M = -i\frac{\dd}{\dd t} U(\Lambda(t))_{t=0}$ is the generator of the boosts in the $(x^0,x^1)$-plane given by a4) and \eqref{boosts}, so  
$e^{it M} = \Delta_1^{it}$ by the Bisognano-Wichmann theorem. 

Such a choice is possible 
generically if $0<\alpha < 1$ (but not for $\alpha=1$). In a dilation invariant theory, this will follow if we can chose {\em one} unitary, $U_1$, in 
$B$ such that the condition is satisfied, by simply setting $U_\ell = e^{i(\log \ell) D} U_1 e^{-i(\log \ell) D}$ for arbitrary $1 \ge \ell > 0$. 
Here $D$ is the generator of dilations on $\H$. This follows because $M$  commutes with $D$. In order to suggest that for any $0<\alpha < 1$, but not $\alpha\ge 1$, there typically ought to exist a unitary, $U_1$, in $B$ such that  $U^*_1 |0 \rangle$ is in the domain of $e^{|M|^\alpha}$, we consider below as an illustrative example of a chiral half of the free massless fermion field in 1+1 dimensions.

However, before that, we point out that we can immediately generalize the above result to more general pairs of open regions $A_1, A_2 \subset \RR^d$ as in fig. \ref{fig:regions}:

\begin{theorem}
\label{ttouch}
Let $A_1 \supset A_2$ be convex, open regions in $\RR^d$ touching in a single point $p$ on their boundaries. Assume furthermore that there exists an 
open ball of radius $R$ contained in $A_2$ whose boundary touches $p$. Let $O_j = D(A_j), j=1,2$ be the causal completions and $\gM_j$ the corresponding 
algebras of observables as in \eqref{Mdef}. Let $\omega$ be the vacuum state of a theory satisfying a1)--a5). Then if $\{U_\ell\}_{\ell > 0}$ is a family of unitary operators as described, satisfying \eqref{Mcond} for the generator of boosts $M$ in the half-space containing $A_1$, and touching $p$, whose spacetime localization shrinks to $p$ as $\ell \to 0^+$, then \eqref{Sineq} holds in this limit, where $\omega_\ell( \ . \ )=
\omega(U_\ell^* \ . \ U_\ell)$ is the state excited by $U_\ell$. 
\end{theorem}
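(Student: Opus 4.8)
The plan is to reduce the general convex configuration to the exactly solvable half-space/ball situation analysed just before the theorem, by squeezing the two relative entropies between those of a \emph{circumscribing half-space} and an \emph{inscribed ball}, and then invoking monotonicity of the relative entropy. First I would fix the geometry at the touching point $p$, which we may take to be the origin. Since $A_1$ is convex and $p\in\partial A_1$, there is a supporting hyperplane of $A_1$ at $p$; let $H$ be the closed half-space it bounds, so that $A_1\subset H$. By hypothesis there is a ball $B_R\subset A_2$ of radius $R$ whose boundary passes through $p$. Because $B_R\subset A_2\subset A_1\subset H$ with $p\in\partial B_R\cap\partial H$, the hyperplane $\partial H$ supports $B_R$ at $p$; as the supporting hyperplane of a ball at a boundary point is unique, $\partial H$ is forced to be the tangent plane to $B_R$ at $p$. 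Hence the pair $(H,B_R)$ is precisely the half-space/ball configuration treated before Theorem \ref{ttouch}, with the same curvature radius $R$, and the boost generator $M=\log\Delta_H$ associated with $H$ is exactly the one appearing in \eqref{Mcond}.

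Next I would record the resulting chain of inclusions. Writing $\gM_H=\pi_0(\A(D(H)))''$ and $\gM_B=\pi_0(\A(D(B_R)))''$, the nesting $B_R\subset A_2\subset A_1\subset H$ together with isotony a1) and monotonicity of domains of dependence gives
\[
\gM_B\subset\gM_2\subset\gM_1\subset\gM_H .
\]
By Reeh--Schlieder the vacuum $|0\rangle$ is cyclic and separating for each of these algebras, so all four relative entropies are well defined. Monotonicity of the relative entropy under restriction to a subalgebra \cite{araki_5} then yields
\[
S_B(\omega/\omega_\ell)\le S_2(\omega/\omega_\ell)\le S_1(\omega/\omega_\ell)\le S_H(\omega/\omega_\ell),
\]
so that $0\le S_1(\omega/\omega_\ell)-S_2(\omega/\omega_\ell)\le S_H(\omega/\omega_\ell)-S_B(\omega/\omega_\ell)$. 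It therefore suffices to bound the outermost difference, which no longer involves the awkward non-geometric modular flow of $\gM_1$.

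I would then apply Theorem \ref{thm1} to the inclusion $\gM_H\supset\gM_B$. The unitary $U_\ell$ lies in $\gM(D(\ell B))\subset\gM_B$ (for small $\ell$ the ball $\ell B$, tangent at $p$ along the inward normal of $H$, sits inside $B_R$), and by the Bisognano--Wichmann theorem the modular flow $\sigma_H^t=\alpha_{\Lambda(t)}$ is the geometric boost \eqref{boosts}. The geometric computation recalled in the text shows that $\sigma_H^t(U_\ell)\in\gM_B$ exactly for $|t|\le\tau=(2\pi)^{-1}|\log(\ell/2R)|$. Case 2) of Theorem \ref{thm1}, with $\gM_1\to\gM_H$ and $\gM_2\to\gM_B$, then gives
\[
S_H(\omega/\omega_\ell)-S_B(\omega/\omega_\ell)\le O\big(\tau^{1-\alpha}e^{-(\pi\tau)^\alpha}\big)\,\omega_{U_\ell^*}\big(e^{H_-^\alpha}\big),
\]
where $H_-=-E_1^-\log\Delta_H=-E_1^-M\le|M|$. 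Since $e^{H_-^\alpha}\le e^{2|M|^\alpha}$ on commuting spectral data, the energy factor obeys $\omega_{U_\ell^*}(e^{H_-^\alpha})=\|e^{H_-^\alpha/2}U_\ell^*\Omega\|^2\le\|e^{|M|^\alpha}U_\ell^*|0\rangle\|^2\le C^2$, uniformly in $\ell$ by \eqref{Mcond}. Inserting $\pi\tau=\tfrac12\log(2R/\ell)=\log\sqrt{R/\ell}+O(1)$ turns this into the asserted bound \eqref{Sineq}.

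The main obstacle is the geometric reduction of the first two paragraphs rather than any new analytic estimate: one has to verify that convexity genuinely forces the circumscribing half-space $H$ to be tangent to the inscribed ball $B_R$ at $p$, so that the nesting $B_R\subset A_2\subset A_1\subset H$ holds with the \emph{correct} tangency and the previously computed value $\tau=(2\pi)^{-1}|\log(\ell/2R)|$ is literally available for the pair $(\gM_H,\gM_B)$. Once this is in place the modular-theoretic part is inherited verbatim from Theorem \ref{thm1}. A secondary check is to confirm that the shrinking localization regions of the $U_\ell$ lie inside the balls $\ell B$ aligned with the inward normal at $p$, guaranteeing that the boosted operators $\sigma_H^t(U_\ell)$ remain in $\gM_B$ for the full range $|t|\le\tau$.
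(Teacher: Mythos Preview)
Your proposal is correct and follows essentially the same route as the paper: introduce a circumscribing half-space $A_3=H$ (via convexity of $A_1$) and the inscribed ball $A_4=B_R$ (given by hypothesis), use monotonicity of the relative entropy along the chain $\gM_B\subset\gM_2\subset\gM_1\subset\gM_H$ to reduce $|S_1-S_2|$ to $|S_H-S_B|$, and then invoke the half-space/ball estimate already established from Theorem~\ref{thm1}. The paper's version is terser, but your additional geometric check that the supporting hyperplane of $A_1$ at $p$ must coincide with the tangent plane of $B_R$ there (so that the boost $M$ in \eqref{Mcond} is the right one and $\tau=(2\pi)^{-1}|\log(\ell/2R)|$ applies) is a welcome clarification rather than a departure.
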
 

\begin{remark}
With the improved bound indicated in remark \ref{improv}, the upper bound \eqref{Sineq} can easily be improved e.g. to the bound \eqref{Sineq1} mentioned in the introduction. As the example of the free massless fermion in the next section suggests, the value of $\alpha$ must be $<1$, so the decay in \eqref{Sineq} falls 
short of the limiting behavior $O(\sqrt{\ell/R})$. 
\end{remark}

\begin{proof}
Let $A_3$ be a half-plane whose boundary touches $p$ and such that $A_3 \supset A_1, A_2$ (which exists due to convexity), and let $A_4$ be an
open ball of radius $R$ contained in $A_2$ whose boundary touches $p$ (which exists by assumption). By the monotonicity of the relative entropies and 
a1), we have, with the obvious notations, 
$S_3 \ge S_1 \ge S_2 \ge S_4$, implying $|S_1-S_2| \le |S_3-S_4|$. However, we have already argued that the claimed bound \eqref{Sineq}
holds  for $|S_3-S_4|$, which finishes the proof.   
\end{proof}

\subsection{Free massless fermions in $1+1$ dimensions}

In order to illustrate the meaning of the condition \eqref{Mcond} entering the assumption of thm. \ref{ttouch}, we now consider the theory of free massless 
fermions in 2-dimensional Minkowski space. As is well known, such a theory can be viewed as the tensor product of two ``chiral halves'', each living on a lightray. 
The discussion boils down to the discussion of these theories on the light ray, and so, for simplicity, we directly focus on them. 

For conformal field theory on one lightray, the axioms a1)-a5) are formulated in a somewhat adapted form, so we first describe this. Instead of a1), we now have a
net $\{ \gA(I) \}$ indexed by open intervals $I \subset \RR$, with $\RR$ thought of as representing one light ray. a2) remains unchanged except that the notion of 
complement is now the ordinary complement of subsets of $\RR$. In a3), we have instead of the Poincare group now the group of 
conformal maps of the ightray, isomorphic to the M\" obuis group ${\rm PSL}_2(\RR)$, where a group element $g=\left(
\begin{matrix}
a & b\\
c & d
\end{matrix}
\right)$ acts by $g(x) = \frac{ax+b}{dx+c}$. In a4) we now have a projective unitary positive energy representation of the M\" obius group, where $P$ is now the generator of translations $x \mapsto x+b$. a5) remains the same. 

The algebras for one chiral half of the free massless Fermion theory are described as follows, see \cite{araki_5,dantoni} for details. The algebra $\gA(I), I \subset \RR$ an open interval is generated as a $C^*$-algebra by the symbols $\psi(f)$, where $f \subset C_0^\infty(I, \CC)$ is a testfunction supported in $I$, and the identity $1$, subject to the CAR relations: $f \mapsto \psi(f)$ linear, $\psi(f) \psi(h) + \psi(h) \psi(f) = (\Gamma f, h) 1$, $\psi(f)^* = \psi(\Gamma f)$, with $( \ , \ )$ the inner product in $L^2(\RR)$ and 
with $\Gamma f(x) = \overline{f(x)}$. The unique $C^*$-norm compatible with these relations is described in \cite{araki_5}, here we only need to know that 
$\| \psi(f) \|^2 = \frac12 (f,f)$ when $f$ is real-valued. Then the relations imply that $\psi(f)=\psi(f)^* = \psi(f)^{-1}$ is unitary when $(f,f) = 2$ and when $f$ is real-valued, which follows immediately from the relations and the properties of the $C^*$-norm. The unique vacuum state satisfying a4) is the unique Gaussian (``quasi-free'' in the terminology of \cite{araki_5}) state specified by the 2-point function
\ben\label{twopoint}
\omega(\psi(h) \psi(f) ) = \frac{i}{2\pi} \int  \frac{h(x) f(y)}{x-y-i0} \, \dd x \dd y . 
\een
Informally, we think of $\psi(f)=\int \psi(x) f(x) \dd x$ as a smeared version of the local (singular)
quantum field $\psi(x)$. We do not describe explicitly the corresponding vacuum representation $\pi_0$, as we will not need its explicit form. It is built on a fermionic Fock-space with vacuum vector $|0\rangle$ representing the above state functional $\omega$.

In order to make contact with the setting described in the previous section, we now set
$\gM_1 = \pi_0(\gA((0, \infty)))'', \gM_2 = \pi_0(\gA((0,1)))''$. One has an analogue of the Bisognano-Wichmann theorem \cite{Hislop:1981uh}, which implies that 
the modular flow $\sigma^t_1$ of $\gM_1$ is geometrically described by the dilations, i.e. the M\"obuis group elements $g(t)  =\left(
\begin{matrix}
e^{\pi t} & 0\\
0 & e^{-\pi t}
\end{matrix}
\right)$, acting on a point by $g_t(x) = e^{2\pi t}x$. In other words, $\Delta_1^{it} = e^{itD}$, where $D=-i\frac{\dd}{\dd t} U(g(t))_{t=0}$ is the 
rescaled generator of dilations in the vacuum representation of the M\"obuis group, a4). We now let $f$ be a real-valued, smooth test-function
supported in $(0,\frac12)$ with $\int f(x)^2 \dd x = 2$, and we define, for $\ell>0$
\ben\label{ULdef}
U_\ell := \psi(f_\ell), \quad f_\ell(x) = f(\ell^{-1}x)/\sqrt{\ell}
\een
It follows that each $U_\ell$ is a unitary operator contained in the local algebra associated with the interval $(0,\frac12 \ell)$. The analogue of thm. \ref{ttouch} for the case at hand is that \eqref{Sineq} (with $R=1$) holds for $\ell \to 0$ provided \eqref{Mcond} (with $M$ replaced by $D$) is satisfied. We would now like to see what it means for $f$ to be such that the condition \eqref{Mcond} is indeed satisfied. Using the spectral theorem we see that this is equivalent to 
\ben
C^2 \ge \| e^{|D|^\alpha} U_\ell^* |0\rangle \|^2 = \int_\RR 
e^{2|s|^\alpha} \left( \int_\RR e^{its} \langle 0 | \psi(f_\ell) e^{itD} \psi(f_\ell) | 0 \rangle \frac{\dd t}{2\pi} \right) 
\dd s 
\een
uniformly in $\ell$, where here and in the following, we identify $\psi(f)$ with their representatives 
$\pi_0(\psi(f))$ on the vacuum Hilbert space in a4). Next, we use $e^{itD} \psi(f_\ell) | 0 \rangle = \psi(f_{\exp(2\pi t)\ell}) | 0 \rangle$
from a3), a4), we use \eqref{twopoint}, and we define $h(u) = e^{\pi u} f(e^{2\pi u})$, which is another smooth test function of compact support. Using also 
\eqref{fourier}, this gives
\ben
\| e^{|D|^\alpha} U_\ell^* |0\rangle \|^2 = \int_\RR \frac{|\hat h(s)|^2}{1+e^s} e^{2|s|^\alpha} \, \dd s 
\een
after a short calculation for all $\ell>0$. The integral on the right converges for large $|s|$ if the decay of $|\hat h(s)|$ is  
$O( |s|^{-1-\epsilon} e^{-|s|^\alpha} ), \epsilon>0$, for example. It is possible to achieve this behavior provided $\alpha<1$ 
\cite{ingham}, but not for $\alpha=1$, as the latter would imply analyticity of $h(u)$, which would be in contradiction with the compact support property. Thus, we conclude:
\begin{proposition}
Let $f$ be a real valued test function supported in $(\frac12,1)$ with $\int f(x)^2 \dd x = 2$ such that the Fourier transform of 
$u \mapsto e^{\pi u} f(e^{2\pi u})$ is of order $O( |s|^{-1-\epsilon} e^{-|s|^\alpha} )$ for $|s| \to \infty, \epsilon >0$
(such functions exist iff $\alpha<1$). Then if $U_\ell$ are the unitaries defined in \eqref{ULdef}, and if $A_1=(0, \infty), A_2=(0,1)$, 
we get \eqref{Sineq} with $R=1$ for the free massless Fermi field on the lightray.  
\end{proposition}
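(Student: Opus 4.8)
The plan is to apply case~2 of Theorem~\ref{thm1} (equivalently, the lightray analogue of Theorem~\ref{ttouch}) to the inclusion $\gM_1 = \pi_0(\gA((0,\infty)))'' \supset \gM_2 = \pi_0(\gA((0,1)))''$, with the unitary $U = U_\ell$. This requires checking three things: (i) that $U_\ell = \psi(f_\ell)$ is unitary and lies in $\gM_2$; (ii) that the modular flow of $\gM_1$ keeps $U_\ell$ inside $\gM_2$ for $|t| \le \tau(\ell)$, with an explicit $\tau(\ell)$; and (iii) that the state $\omega_{U_\ell^*}$ has the finite, $\ell$-uniform energy moment $\omega_{U_\ell^*}(e^{H_-^\alpha})$ demanded by the theorem, which is exactly where condition \eqref{Mcond} enters. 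Once these hold, inserting $\tau = \tau(\ell)$ into the decay rate of case~2 produces \eqref{Sineq} directly.

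The first two points are immediate. Since $f$ is real with $\int f^2 = 2$, the scale-invariant normalization in \eqref{ULdef} gives $\int f_\ell^2 = \int f^2 = 2$, so the CAR relations yield $\psi(f_\ell)^* = \psi(f_\ell) = \psi(f_\ell)^{-1}$ and $U_\ell$ is unitary; moreover $f_\ell$ is supported in a small subinterval of $(0,1)$ for $\ell$ small, so $U_\ell \in \gM_2$. By the lightray Bisognano--Wichmann theorem $\Delta_1^{it} = e^{itD}$ acts as the dilation $x \mapsto e^{2\pi t}x$, so $\sigma_1^t(U_\ell) = \psi(f_{e^{2\pi t}\ell})$; the image interval stays inside $(0,1)$ until its right endpoint reaches $1$, giving $\tau(\ell) = \tfrac{1}{2\pi}|\log\ell| + O(1)$, i.e. $\pi\tau \sim \log\sqrt{1/\ell} = \log\sqrt{R/\ell}$ with $R=1$.

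The crux is (iii), and here I would follow the computation preceding the proposition. Using covariance $e^{itD}\psi(f_\ell)|0\rangle = \psi(f_{e^{2\pi t}\ell})|0\rangle$, inserting the two-point function \eqref{twopoint}, and passing to the logarithmic variable through $h(u) = e^{\pi u} f(e^{2\pi u})$ — a smooth, compactly supported function — the scale covariance of the kernel $1/(x-y-i0)$ makes every power of $\ell$ cancel, while the $t$-Fourier transform turns the autocorrelation of $h$ into $|\hat h(s)|^2$ and, via \eqref{fourier}, the $\sinh$-kernel into a factor $(1+e^s)^{-1}$. This yields the $\ell$-independent identity
\ben
\| e^{|D|^\alpha} U_\ell^* |0\rangle \|^2 = \int_\RR \frac{|\hat h(s)|^2}{1+e^s}\, e^{2|s|^\alpha}\, \dd s .
\een
The integrand is governed by $|\hat h(s)|^2 e^{2|s|^\alpha}$, which is integrable precisely when $|\hat h(s)| = O(|s|^{-1-\epsilon} e^{-|s|^\alpha})$, the hypothesis placed on $f$. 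I expect the genuine difficulty to be exactly this point: such $f$ (equivalently $h$) exist for every $\alpha < 1$ by the Ingham-type construction \cite{ingham}, but not for $\alpha = 1$, since then $\hat h$ would decay like $e^{-|s|}$, forcing $h$ to continue analytically into a strip and contradicting compact support. This is the uncertainty-principle obstruction that caps $\alpha$ strictly below $1$.

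It remains to assemble the bound. Because $\log\Delta_1 = D$, the negative part $H_- = -E_1^-\log\Delta_1$ (see \eqref{decomp}) satisfies $0 \le H_- \le |D|$, whence $e^{H_-^\alpha} \le e^{|D|^\alpha} \le e^{2|D|^\alpha}$ and therefore $\omega_{U_\ell^*}(e^{H_-^\alpha}) \le \| e^{|D|^\alpha} U_\ell^*|0\rangle\|^2 \le C^2$, uniformly in $\ell$ by \eqref{Mcond}. Feeding this together with $\pi\tau \sim \log\sqrt{R/\ell}$ into case~2 of Theorem~\ref{thm1} gives
\ben
|S_1(\omega/\omega_\ell) - S_2(\omega/\omega_\ell)| \le C^2\, O\!\left(\tau^{1-\alpha} e^{-(\pi\tau)^\alpha}\right) = O\!\left((\log\sqrt{R/\ell})^{1-\alpha}\exp[-(\log\sqrt{R/\ell})^\alpha]\right),
\een
which is \eqref{Sineq} with $R=1$; the additive $O(1)$ in $\tau$ is harmless since $(x+O(1))^\alpha - x^\alpha \to 0$ as $x\to\infty$ for $\alpha<1$.
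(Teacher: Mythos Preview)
Your proposal is correct and follows essentially the same approach as the paper: the proposition is a summary of the discussion immediately preceding it, and you have reproduced that discussion faithfully—setting up the inclusion, invoking the lightray Bisognano--Wichmann theorem to identify $\sigma_1^t$ with dilations and compute $\tau(\ell)$, and then carrying out the spectral/two-point computation that yields the $\ell$-independent formula $\|e^{|D|^\alpha}U_\ell^*|0\rangle\|^2 = \int_\RR |\hat h(s)|^2(1+e^s)^{-1}e^{2|s|^\alpha}\,\dd s$. Your only additions are the explicit verification that $H_- \le |D|$ (via the joint functional calculus) so that \eqref{Mcond} controls $\omega_{U_\ell^*}(e^{H_-^\alpha})$, and the remark that the additive $O(1)$ in $\tau$ is harmless; both are correct elaborations of steps the paper leaves implicit.
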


\subsection{Large regions in thermal states}

Next we consider a thermal representation $\pi_\beta$ on a Hilbert space $\H_\beta$ with state vector $|0_\beta \rangle$ satisfying the KMS condition at temperature $\beta>0$. We let $O_1 = \mathbb M$ be the entire Minkowski space and $O_2$ a double cone of a ball of radius $r$ centered at the origin. 
We are going to let $r$ become large. The observable algebras (systems) are chosen to be:
\ben
\gM_j = \pi_\beta(\A(O_j))'',  \quad \H = \H_\beta, \quad |\Omega\rangle = |0_\beta \rangle, \quad \omega = \omega_\beta. 
\een 
It follows that the modular flow of $\gM_1$ is given by backward 
time-translations, i.e. $\sigma_1^t = \alpha_{-\beta t e}$, with $\alpha_{te}$ standing for the time-translation
automorphism (see a3)) into a  time-like direction fixed by a unit vector $e$ (the rest frame of the thermal bath). The commutant $\gM_1'$ is often called in this context the ``thermo-field double''. For finite dimensional systems as in the example in sec. \ref{sect1}, we would have  $\gM_1 = M_n(\CC) \otimes 1_n$, and
$|0_\beta \rangle = Z_\beta^{-1/2} \sum_{j=1}^n e^{-\beta E_j/2} |j\rangle \otimes |j \rangle$, where $E_j$ are the energy eigenvalues of some self-adjoint Hamiltonian $H$ on $\CC^n$. In this case, $\beta^{-1}\log \Delta_1 = -H \otimes 1 + 1 \otimes H \equiv -H_\beta$. This operator is sometimes called the ``Liouvillean''. 

Now let $U$ be a unitary in some fixed double cone of unit size centered about the origin and let $\omega_U = \omega( U^* . U)$ be the excited state
(represented by the vector $U |0_\beta\rangle$). Hence, the maximum time $\tau$ such that 
$\sigma_1^t(U)$ remains in $\gM_2$ for all $|t| \le \tau$ is of order $\tau \sim r/\beta$ for $r \to \infty$. Case 2) of thm. \ref{thm1} gives for instance
\ben
| S_1(\omega / \omega_{U}) -  S_2(\omega / \omega_{U}) | 
\le  O\bigg( (r/\beta)^{1-\alpha} e^{-(\pi r/\beta)^\alpha} \bigg),
\een
assuming in this case that our unitary $U$ is chosen such that $U^* |0_\beta \rangle$ is in the domain of 
$e^{|H_\beta|^\alpha/2}$, where $H_\beta$ is the generator of time-translation in the thermal representation\footnote{
According to thm. \ref{thm1}, a sufficient condition would be that $U^* |0_\beta \rangle$ is in the domain of 
$e^{(H_\beta^+)^\alpha/2}$, where $H_\beta^+$ denotes the part of $H_\beta$ that is projected onto the {\em positive} spectral subspace. The relative minus sign is due to the fact that $\log \Delta_1 = -\beta H_\beta$.}. On the other hand, if we merely know that $U^* |0_\beta \rangle$ is in the domain of $|H_\beta|^n$ for some $n > 1$, then we learn from case 1)  of thm. \ref{thm1} that 
\ben
| S_1(\omega / \omega_{U}) -  S_2(\omega / \omega_{U}) | 
\le  O((r/\beta)^{-n+1})
\een
which is evidently a weaker decay. These examples should suffice to illustrate how to apply thm. \ref{thm1}.

\section{Conclusions}

Our main result \eqref{Sineq1} can be stated in a less precise fashion as saying that, if $\rho_A$ is the reduced density matrix of the vacuum state for a region $A$, then $S(\rho_A / U \rho_A U^*)$ is independent of the global shape of $A$ when the localization of a unitary $U$ converges to a point $p$ on the boundary $\partial A$. It is perhaps possible to say more along the following lines. One can look at $-S(\rho_A / U \rho_A U^*)$ in the spirit of the 1st law of thermodynamics as $\Delta S(p) - T(p) \Delta E(p)$ \cite{Casini:2008cr} (see also \cite{Longo:2018zib}). 
Here $\Delta S(p)=S_{\rm vN}(\rho_A)- S_{\rm vN}(U \rho_A U^*)$ is the difference between the v. Neumann entropies and $T(p)$ should be  
a local temperature in the spirit of \cite{Arias:2016nip}, in the limit when the localization of $U$ approaches $p$. 
The idea would then be that $T(p)$ only depends on the geometry of $\partial A$ at $p$. It would be interesting to investigate this further in a general setting, perhaps along the lines of \cite{Arias:2016nip}. 

\medskip
\noindent
{\bf Acknowledgements:} It is a pleasure to thank Centro Atomico Balseiro, Bariloche, Argentina, 
for hospitality during my visit in March 2018, as well as the Simons Foundation 
for financially supporting that visit. I have greatly benefited from discussions with H. Casini, M. Huerta, and 
D. Pontello.

\end{document}